\DeclareMathOperator{\categorical}{Categorical}
\DeclareMathOperator{\gammadistribution}{Gamma}
\newtheorem{theo}{Theorem}[section]
\theoremstyle{remark}
\newtheorem{remark}{Remark}
\def\bN{{\mathbb N}}
\def\X{{\mathcal{X}}}
\def\Y{{\mathcal{Y}}}
\title{\ Joint Bayesian Inference of Parameter and Discretization Error Uncertainties in ODE Models}%
\author{Shoji Toyota  \thanks{Department of Advanced Information Technology, Kyushu University,  Japan} \\ \texttt{toyota@ait.kyushu-u.ac.jp} \and Yuto Miyatake \thanks{D3 Center, The University of Osaka, Japan}  \\ \texttt{yuto.miyatake.cmc@osaka-u.ac.jp}}%
\date{}%
\begin{document}
    \maketitle
    
\begin{abstract}
We address the problem of Bayesian inference for parameters in ordinary differential equation (ODE) models based on observational data. Conventional approaches in this setting typically rely on numerical solvers such as the Euler or Runge–Kutta methods. However, these methods generally do not account for the discretization error induced by discretizing the ODE model. 
We propose a Bayesian inference framework for ODE models that explicitly quantifies discretization errors. Our method models discretization error as a random variable and performs Bayesian inference on both ODE parameters and variances of the randomized discretization errors, referred to as the {\em discretization error variance.} A key idea of our approach is the introduction of a Markov prior on the temporal evolution of the discretization error variances, enabling the inference problem to be formulated as a state-space model. Furthermore, we propose a specific form of the Markov prior that arises naturally from standard discretization error analysis. This prior depends on the step size $h$ in the numerical solver, and we discuss its asymptotic property as $h \rightarrow +0$. Numerical experiments illustrate that the proposed method can simultaneously quantify uncertainties in both the ODE parameters and the discretization errors, and can produce posterior distributions over the parameters with broader support by accounting for discretization error.
\end{abstract} 

\newpage
\tableofcontents

\newpage
\section{Introduction}\label{sec:intro}

Ordinary differential equation (ODE) models are widely used in various scientific and engineering disciplines, including neuroscience \citep{fitzhugh1961impulses}, epidemiology \citep{kermack1927contribution}, population biology \citep{Volterra1926FluctuationsIT} and control engineering \citep{ogata2010modern}. These models often involve parameters that cannot be directly observed or determined; inferring such parameters from observational data is a fundamental problem. 
Bayesian inference is a standard approach for parameter estimation in ODE models; by placing a distribution---referred to as a \textit{prior}---over the ODE parameter space, Bayesian inference allows us to estimate parameters in the form of a distribution known as a \textit{posterior}. Since this posterior distribution is a probability distribution, Bayesian inference enables parameter inference along with uncertainty quantification.

One of the primary challenges in performing Bayesian inference for ODE models is the intractability of their likelihoods as they usually involve the exact solution to the ODE models. Note that standard Bayesian inference methods, such as Markov Chain Monte Carlo \citep{gilks1995markov} or variational methods \citep{beal2003}, typically assume that model likelihoods are tractable; thus, this assumption is often violated in the context of ODE models. A practical strategy to address this issue involves using numerical solvers, such as Euler or Runge–Kutta methods, to approximate solutions to the ODEs \citep{butcher2016numerical, hairer2020solving}. For instance, as often employed in simulation-based inference schemes \citep{cranmer2020frontier} and 4D variational (4D-Var) data assimilation \citep{asch2016data}, the likelihood is evaluated approximately by replacing the exact solution in the likelihood with a numerical solution. Another widely used method from statistics is Approximate Bayesian Computation (ABC) \citep{beaumont2002approximate}, which estimates the posterior distribution by comparing observations to numerical solutions obtained from numerical solvers.
 
These Bayesian inference methods for ODEs typically assume that numerical solvers yield sufficiently accurate solutions. However, this assumption does not always hold. For example, for chaotic systems, large-scale problems, and highly oscillatory dynamics, obtaining accurate numerical solutions remains challenging. In such settings, discretization errors may lead to biased or misleading posterior distributions (see, e.g., \citep{conrad2017statistical}). This highlights the need for frameworks that can quantify the uncertainty arising from discretization errors in ODEs and to infer ODE parameters that accounts for this uncertainty.

Motivated by this objective, various methods for quantifying discretization errors have been developed, particularly within the framework of \emph{Probabilistic Numerics}~\citep{hennig2022probablistic}. The central idea of the framework is to interpret numerical computation as a statistical inference, enabling a principled quantification of discretization uncertainty. A well-established example is Bayesian ODE solvers~\citep{beck2024diffusiontemper,  kersting2020,le2025modelling, schmidt2021sir, SchoberSarkkaHennig2018, pmlr-v162-tronarp22a, Tronarp2018, TroSarHen2021}, which reformulate the numerical solution of ODEs as a Gaussian process inference problem, in contrast to traditional deterministic solvers. Another prominent line is perturbative approaches~\citep{abdulle2020random, conrad2017statistical, lie2022randomised, lie2019strong}, which compute numerical solutions together with probabilistic perturbations.

A distinct class of discretization error quantification methods for ODEs is the \emph{discretization error variance} approach~\citep{MARUMO2024modelling, Matsuda2019estimation, miyatake2025quantifying}. In this framework, standard numerical solvers such as the Runge--Kutta method are employed to solve ODEs, and the discretization errors introduced by these solvers are modeled as random variables. Their variances---referred to as \emph{discretization error variances}---are treated as statistical quantities to be inferred from observations.  
A key distinction between this approach and those described above is that the discretization error itself is inferred directly from observations. Another important feature is that insights from classical error analysis can be incorporated into the modeling of the discretization error variance.  
For example, \cite{MARUMO2024modelling, Matsuda2019estimation, miyatake2025quantifying} imposed a monotonicity constraint on the discretization error variance, reflecting the well-known principle from numerical analysis that numerical errors tend to accumulate over time.  
However, within this context, a Bayesian framework that simultaneously infers both the model parameters and the discretization error variance remains undeveloped.

In this study, we propose a Bayesian inference framework for jointly estimating ODE parameters and discretization errors, building upon the discretization error variance approach. 
By adopting a fully Bayesian perspective, we place priors not only on the ODE parameters but also on the discretization error variances, thereby enabling joint uncertainty quantification of both ODE parameters and discretization errors.

A key feature of our framework is to set a prior over the time course of the discretization error variances that satisfies the markov property. By imposing a markov property on a prior, our objectives can be formulated as an inference on state-space models, in which the discretization error variances, observations, and ODE parameters correspond to the latent variables, observation variables, and parameters of the state-space model, respectively (Figure~\ref{fig:state_space}). In state-space modeling, various methodologies have been developed for estimating the posterior distributions of latent states together with parameters in a state-space model; by leveraging them, we are able to perform joint Bayesian inference on both the discretization error variances and the ODE parameters. In particular, we employ particle filtering \citep{doucet2001sequential,gordon1993novel,kitagawa1996monte}, in combination with a self-organizing technique \citep{kitagawa1998self-organizing}, to obtain the posterior over discretization error variances and ODE parameters.

\begin{figure}[t]
    \centering
    \resizebox{0.7\linewidth}{!}{
    \begin{tikzpicture}
        \node[circle, draw, text width=0.9cm,align=center] (c1) {$\Sigma_{t_{i-1}}$};
         \node[circle, draw, text width=0.9cm,align=center, right = 1.2cm of c1] (c2) {$\Sigma_{t_i}$};
         \node[circle, draw, text width=0.9cm,align=center, right = 1.2cm of c2] (c3) {$\Sigma_{t_{i+1}}$};
         \node[circle, draw, text width=0.9cm,align=center, above = 1cm of c1] (c4) {$y_{t_{i-1}}$};
         \node[circle, draw, text width=0.9cm,align=center, above = 1cm of c2] (c5) {$y_{t_i}$};
         \node[circle, draw, text width=0.9cm,align=center, above = 1cm of c3] (c6) {$y_{t_{i+1}}$};
         \node[left = 1.2cm of c1] (c0) {};
         \node[right = 1.2cm of c3] (c7) {};

         \draw [-{Stealth[length=2mm]}, thick] (c1)--node[above] {$\theta$} (c2);
         \draw [-{Stealth[length=2mm]}, thick] (c2)--node[above] {$\theta$}(c3);
         \draw [-{Stealth[length=2mm]}, thick] (c1)--node[right]{$\theta$}(c4);
         \draw [-{Stealth[length=2mm]}, thick] (c2)--node[right]{$\theta$}(c5);
         \draw [-{Stealth[length=2mm]}, thick] (c3)--node[right]{$\theta$}(c6);

         \draw [-{Stealth[length=2mm]}, thick] (c0)--node[above] {$\theta$}(c1);
         \draw [-{Stealth[length=2mm]}, thick] (c3)--node[above] {$\theta$}(c7);
    \end{tikzpicture}
    }
    \caption{Our state-space modeling formulation. Let $y_t$, $\theta$, and $\Sigma_t$ denote the observation, the ODE parameter, and the discretization error variance, respectively  (their formal definitions are provided in the next section).  By placing a Markov prior on the discretization error variances $\Sigma_{t_i}$,  we obtain a state-space model in which the latent variables correspond to the discretization error variances. The ODE parameters $\theta$ govern both the latent transition  $\Sigma_{t_i} \rightarrow \Sigma_{t_{i+1}}$  and the observation process $\Sigma_{t_i} \rightarrow y_{t_i}$. } 
    \label{fig:state_space}
\end{figure}
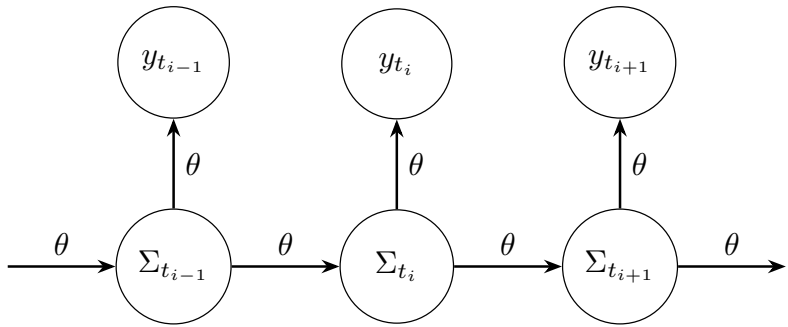

The specification of a Markov prior over the discretization error variances is crucial for ensuring the effectiveness of the proposed method. In this paper, we propose a Markov prior inspired by the standard discretization error analysis of ordinary differential equations (ODEs). In numerical analysis, it is well known that the discretization error can be represented by the accumulation of local  errors, the errors incurred at each individual step. From this viewpoint, we introduce a Markov prior in which the local truncation errors accumulate additively over time. We further investigate theoretical properties of the proposed prior; in particular, since the prior depends on the step size of the solver used to obtain the numerical solution, we analyze the asymptotic behavior of the proposed prior as the step size approaches zero.

The paper is organized as follows. Section~\ref{sec:preliminaries} provides background for this research. In Section~\ref{sec:method}, we present the proposed method. Specifically, we first reformulate the inference of discretization error variances as a state-space modeling problem (Subsection~\ref{subsec:state-space}). We then extend this framework to the joint Bayesian inference of both discretization error variances and ODE parameters using a self-organizing state-space model (Subsection~\ref{subsec:JointBayes}). In Section~\ref{sec:prior}, we construct a Markov prior for discretization error variances motivated by principles from numerical error analysis and establish an asymptotic property of the proposed prior as the step size goes to zero. Section~\ref{sec:experiments} demonstrates the effectiveness of our method through numerical experiments on the pendulum system and the FitzHugh–Nagumo model. Finally, Section~\ref{sec:conclusions} summarizes our contributions and discusses directions for future work.

\section{Preliminaries}
\label{sec:preliminaries}

\subsection{Bayesian Inference for ODE Models}

Consider a $d_\mathcal{X}$-dimensional ordinary differential equation (ODE):
\begin{equation}\label{eq:ODEmodel}
    \frac{dx(t)}{dt} = f(x(t; \theta), \theta),
\end{equation}
where the initial condition $x(0; \theta)$ and the solution $x(t; \theta)$ lie in $\mathbb{R}^{d_\X}$ with ${d_\X} \in \bN_{>0}$. 
Here $\theta \in \mathbb{R}^{d}$ is an unknown parameter that should be estimated. The function $f$ is defined as $f: \mathbb{R}^{d_\mathcal{X}} \times \mathbb{R}^d \rightarrow \mathbb{R}^{d_\mathcal{X}}$. We assume that noisy observations $y^*_{t_0}, \dots, y^*_{t_N} \in \mathbb{R}^{d_\Y}$ of the true trajectory $x(t; \theta^*)$ are available at discrete time points $ 0 \leq  t_0 < \cdots < t_N$, where
\begin{equation}\label{eq:observation_process}
    y^*_{t_i} = H  x\left( t_i; \theta^* \right) + \varepsilon_i. 
\end{equation}
Here, the linear observation operator $H \in  \mathbb{R}^{d_\mathcal{Y} \times d_\mathcal{X}}$ is a full rank matrix, and observation noises $\varepsilon_i \sim \mathcal{N}(0, \Gamma)$ are drawn i.i.d.~from a multivariate normal distribution with an invertible covariance matrix $\Gamma$. 

A principled approach to inferring the ODE parameter from observations $y^*_{t_0}, \dots, y^*_{t_N} $ is Bayesian inference. Given a likelihood $p\left(y_{t_0}, \dots, y_{t_N} \mid \theta \right)$ and a prior distribution $p(\theta)$, Bayesian inference aims to compute the posterior distribution:
$$
p(\theta \mid y^*_{t_0}, \dots, y^*_{t_N} ) \propto p( y^*_{t_0}, \dots, y^*_{t_N} \mid \theta) \cdot p(\theta).
$$
Traditional Bayesian inference methods, such as Markov Chain Monte Carlo (MCMC) \citep{gilks1995markov} and variational inference \citep{beal2003}, require access to an analytical form of the likelihood:
\begin{align*}
    &p\left(y_{t_0}, \dots, y_{t_N} \mid \theta \right) 
    = \prod_{i=0}^N p \left( y_{t_{i}} \mid \theta \right) \\
    &\quad = \prod_{i=0}^N 
    \frac{1}{\sqrt{(2\pi)^{d_{\mathcal{Y}}} |\Gamma|}}
    \exp\left\{
        -\frac{1}{2} \left( y_{t_{i}} - H x\left( t_i; \theta \right) \right)^{\mathsf{T}}
        \Gamma^{-1}
        \left( y_{t_{i}} - H x\left( t_i; \theta \right) \right)
    \right\},
\end{align*}
where $x(t_i; \theta)$ denotes the solution of the ODE at time $t_i$ given ODE parameter $\theta$.

However, evaluating this likelihood is often intractable as the exact solution $x(t_i; \theta)$ is inaccessible.
Although numerical solvers such as Runge–Kutta or Euler methods are commonly employed,
numerical solutions $x_{t_i}(\theta) \approx x(t_i; \theta)$ inevitably deviate from the true trajectory $x(t_i; \theta)$ due to discretization errors. These errors introduce bias into the likelihood, which can significantly affect inference accuracy. To address this issue, several methods have been proposed to statistically quantify discretization errors. Well-known examples include Bayesian ODE solvers~\citep{beck2024diffusiontemper, bosch2024parallel, kersting2020,  le2025modelling,schmidt2021sir,  SchoberSarkkaHennig2018, pmlr-v162-tronarp22a, Tronarp2018, TroSarHen2021} and perturbative methods~\citep{abdulle2020random, conrad2017statistical, lie2022randomised, lie2019strong}, both of which fall within the category of \emph{Probabilistic Numerics}~\citep{hennig2022probablistic}.

\subsection{Discretization Error Variance}

In the next section, we present a Bayesian inference method for estimating the ODE parameter $\theta$ that explicitly accounts for discretization errors.
Our discussion builds on the recently proposed concept of \emph{discretization error variance}~\citep{MARUMO2024modelling, Matsuda2019estimation, miyatake2025quantifying}.
The central idea of this concept is summarized as follows.

In contrast to Bayesian ODE solvers and perturbative methods, we explicitly model the discretization errors induced by a numerical solver as random variables.
Specifically, the discretization error at $t=t_i$:
\[
  r_{t_i} = x(t_i; \theta) - x_{t_i}(\theta)
\]
is assumed to follow a Gaussian distribution
\begin{equation}\label{eq:discretization_error_variances}
    r_{t_i} \sim \mathcal{N}(0, \Sigma_{t_i}).
\end{equation}
Here, the covariance matrix $\Sigma_{t_i}$, referred to as the \emph{discretization error variance}, is treated as a statistical quantity to be inferred from observations and numerical solutions, thereby providing a way to quantify the discretization error. 
Equivalently, the exact solution $x(t_i; \theta)$ can also be interpreted as a random variable:
\begin{equation}
    x(t_i; \theta) \sim \mathcal{N}(x_{t_i} (\theta), \Sigma_{t_i}),
\end{equation}
where $x_{t_i} (\theta)$ denotes a numerical approximation.

Under this framework, we obtain a modified likelihood
\begin{align}
    & p\left( y_{t_0}, \dots, y_{t_N} \mid \theta, \Sigma_{t_1}, \dots, \Sigma_{t_N} \right) = \prod_{i=0}^N p(y_{t_i} \mid \Sigma_{t_i},\theta ) \notag \\
    &\quad = \prod_{i=0}^N \frac{1}{\sqrt{(2\pi)^{d_{\mathcal{Y}}} |\Gamma + V_{t_i}|}}
\exp\left\{
 -\frac{1}{2} \left( y_{t_i} - Hx_{t_i} \left(\theta \right)  \right) ^{\mathsf{T}}
 (\Gamma + V_{t_i})^{-1}
 \left( y_{t_i} - H x_{t_i}\left( \theta \right) \right)
\right\} \label{eq:likelihood_with_discretization}
\end{align}
with $V_{t_i} = H \cdot \Sigma_{t_i}\cdot  H^\mathsf{T}$, which will also be referred to as the discretization error variance.
A central feature of this framework is to infer the ODE parameters and the discretization error variances simultaneously.

\section{
A New Approach for Joint Bayesian Inference of ODE Parameters and Discretization Error Variances
} \label{sec:method}

In this study, we aim to develop a method for performing Bayesian inference for ODE parameters while accounting for discretization errors introduced by numerical solvers.
Specifically,
we focus on the ODE parameters \( \theta \) and the discretization error variances \( \Sigma_{t_i} \).

Given a prior distribution \( p(\theta, \Sigma_{t_0}, \dots, \Sigma_{t_N}) \) on ODE parameters $\theta$ and discretization error variances $\Sigma_{t_i}$, our target is to obtain the posterior distribution
\begin{align}
    & p\left( \theta, \Sigma_{t_0}, \dots, \Sigma_{t_N} \mid y_{t_0}, \dots, y_{t_N} \right)  \propto p\left( y_{t_0}, \dots, y_{t_N} \mid \theta, \Sigma_{t_0}, \dots, \Sigma_{t_N} \right) \cdot p(\theta, \Sigma_{t_0}, \dots, \Sigma_{t_N}). \label{eq:our_goal}
\end{align}
The key components of this framework are the specification of the prior  \( p(\theta, \Sigma_{t_0}, \dots, \Sigma_{t_N}) \) and the construction of an algorithm for sampling from the posterior distribution.
In this section, we outline the general ideas underlying the prior and the sampling scheme.
A specific form of the prior will be discussed in the next section, noting that some flexibility in its choice remains.

We note that this form of joint Bayesian inference has not been explored in the literature on discretization error variance.
Previous studies either perform point estimation of the ODE parameters and discretization error variances~\citep{MARUMO2024modelling, Matsuda2019estimation}, or conduct Bayesian inference only for discretization error variances, without jointly inferring the ODE parameters~\citep{miyatake2025quantifying}.

\subsection{State-Space Model Interpretation of Discretization Error Variance Inference}\label{subsec:state-space}

To illustrate our core idea clearly, we first focus on the Bayesian inference of the discretization error variances $\Sigma_{t_i}$ alone.
In this setting, in stead of the original objective~\eqref{eq:our_goal}, our target is the posterior distribution
\begin{align}
    & p\left(\Sigma_{t_0}, \dots, \Sigma_{t_N} \mid y_{t_0}, \dots, y_{t_N} \right)  \propto p\left( y_{t_0}, \dots, y_{t_N} \mid \Sigma_{t_0}, \dots, \Sigma_{t_N} \right) \cdot p(\Sigma_{t_0}, \dots, \Sigma_{t_N}), \label{eq:our_semigoal}
\end{align}
which concerns only 
the discretization error variances $\left( \Sigma_{t_0}, \dots, \Sigma_{t_N} \right)$. 
The joint inference (\ref{eq:our_goal}) of ODE parameters and discretization error variances will be discussed in Section~\ref{subsec:JointBayes}.

In our approach, we interpret the discretization error variances as latent variables and the observations $y_i$ as observation variables in a corresponding state-space model. Under this formulation, posterior sampling can be performed using a filtering-based approach.\footnote{The introduction of a Markov prior is also a key component in Bayesian ODE solvers for reducing computational costs \citep{beck2024diffusiontemper, bosch2024parallel, cockayne2019bayesian, schmidt2021sir, pmlr-v162-tronarp22a}.  
The essential difference, however, is that Bayesian ODE solvers place a Markov prior on solutions of the differential equation itself, whereas in this study the Markov prior is placed on trajectories of the discretization errors.  
As will be discussed in the next section, imposing a prior distribution directly on the discretization error allows insights from classical error analysis to be incorporated into the prior---something that is difficult to achieve when the prior is placed on the solution itself.

}

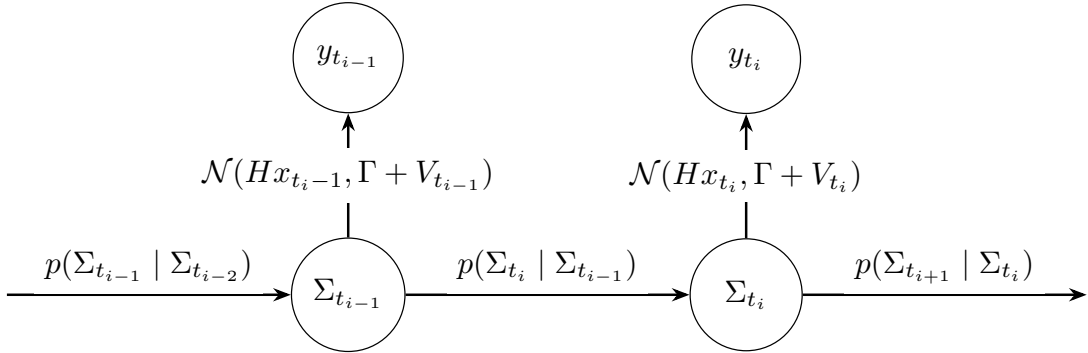
\begin{figure}[t]
    \centering
\resizebox{0.95\linewidth}{!}{
    \begin{tikzpicture}
    \node[circle, draw, text width=0.9cm,align=center] (c1) {$\Sigma_{t_{i-1}}$};
    \node[circle, draw, text width=0.9cm,align=center, right = 3.2cm of c1] (c2) {$\Sigma_{t_i}$};
    \node[circle, draw, text width=0.9cm,align=center, above = 1.4cm of c1] (c3) {$y_{t_{i-1}}$};
    \node[circle, draw, text width=0.9cm,align=center, above = 1.4cm of c2] (c4) {$y_{t_i}$};
    \node[left = 3.2cm of c1] (c5) {};
    \node[right = 3.2cm of c2] (c6) {};

    \draw [-{Stealth[length=2mm]}, thick] (c1)--node[above, fill=white,sloped]{$p(\Sigma_{t_i}\mid \Sigma_{t_{i-1}})$} (c2);
    \draw [-{Stealth[length=2mm]}, thick] (c1)--node[fill=white,sloped,rotate=-90]{$\mathcal{N}(H x_{t_i-1},\Gamma + V_{t_{i-1}})$} (c3);
    \draw [-{Stealth[length=2mm]}, thick] (c2)--node[fill=white,sloped,rotate=-90]{$\mathcal{N}(H x_{t_i},\Gamma + V_{t_i})$} (c4);
    \draw [-{Stealth[length=2mm]}, thick] (c5)--node[above, fill=white,sloped]{$p(\Sigma_{t_{i-1}}\mid \Sigma_{t_{i-2}})$}(c1);
    \draw [-{Stealth[length=2mm]}, thick] (c2)--node[above, fill=white,sloped]{$p(\Sigma_{t_{i+1}}\mid \Sigma_{t_i})$}(c6);
    \end{tikzpicture} }
    
    \caption{Reduction of our objective (\ref{eq:our_semigoal}) to a state-space model. Its latent transition and observation process are given by a markov prior $p(\Sigma_{t_{i+1}}|\Sigma_{t_{i}})$ and $\mathcal{N}(Hx_{t_i}, V_{t_i} + \Gamma)$ respectively.}
    \label{fig:state_space2}
\end{figure}

Specifically, we assume a Markovian structure for the prior distribution over the discretization error variances:
\[
\Sigma_{t_0} \to \Sigma_{t_1} \to \cdots \to \Sigma_{t_N}.
\]
Under this Markov property, the discretization error variances and the observations form a state-space model, as illustrated in Figure~\ref{fig:state_space2}.
Here, the latent transition $\Sigma_{t_i} \rightarrow \Sigma_{t_{i+1}}$ corresponds to the Markov prior $p\bigl( \Sigma_{t_{i+1}} \mid \Sigma_{t_i} \bigr)$ over the discretization error variances, while the observation process $\Sigma_{t_i} \rightarrow y_{t_i}$ is given by
\begin{equation}
    p\bigl( y_{t_i} \mid \Sigma_{t_i} \bigr) 
     = \frac{1}{\sqrt{(2\pi)^{d_{\mathcal{Y}}} \lvert \Gamma + V_{t_i} \rvert}}
    \exp\left\{
        -\frac{1}{2} \left( y_{t_i} - H x_{t_i}  \right)^{\mathsf{T}}
        (\Gamma + V_{t_i})^{-1}
        \left( y_{t_i} -  H x_{t_i} \right)
    \right\},
    \label{eq:ml_ySigma}
\end{equation}
which is a multivariate normal distribution with mean given by the numerical solution $x_{t_i}$ at time $t_i$ and covariance $\Gamma + V_{t_i}$. Here, $V_{t_i}$ is defined by $ V_{t_i}= H \cdot \Sigma_{t_i}\cdot  H^\mathsf{T}$.

For the posterior sampling, 
we primarily employ the particle filter \citep{ doucet2001sequential,gordon1993novel, kitagawa1996monte}, as detailed in Algorithm~\ref{alg:particle}. 
Hereafter, the tuples $\left\{ y_{t_0}, ..., y_{t_i}  \right\}$ and $\left\{ \Sigma_{t_0}, ..., \Sigma_{t_i} \right\}$ are often abbreviated by $y_{0:i}$ and $\Sigma_{{0:i}}$ respectively.  
The particle filter approximates the posterior distribution $p\left( \Sigma_{t_{i}} \mid y_{0:i}^* \right)$ by a finite sum $\frac{1}{K} \sum_{k=0}^K \delta_{ \Sigma_{i|0:i}^k}
$,
where $\delta_{ \Sigma_{i|0:i}^k}$ denotes the Dirac measure at the discrete point $\Sigma_{i|0:i}^k $, 
and  each $\Sigma_{i|0:i}^k $ is called a \emph{particle}.
Given $K$ particles $\{\Sigma_{i|0:i}^k \}_{k=1}^K$ representing $p\left( \Sigma_{t_{i}} \mid y_{0:i}^* \right)$, the particles $\{\Sigma_{i+1|0:i+1}^k \}_{k=0}^K$ of $p\left( \Sigma_{t_{i+1}} \mid y_{0:i+1}^* \right)$ at the next time step $t_{i+1}$ are recursively obtained through the \emph{prediction} and \emph{correction} steps (see Sections~\ref{subsubsec:prediction} and~\ref{subsubsec:correction}). 
Finally, particles representing the target distribution $p\left( \Sigma_{{0:N}} \mid y_{0:N}^* \right)$ are obtained via the \emph{smoothing} step (see Section~\ref{subsubsec:smoothing}).

Before detailing each step, we emphasize that while evaluating the modified likelihood \eqref{eq:ml_ySigma} is straightforward, it is crucial to choose the Markov prior $p\left( \Sigma_{t_{i+1}} \mid \Sigma_{t_i} \right)$ such that sampling from it can be performed efficiently (the selection will be discussed in Section~\ref{sec:prior}).

\begin{figure}[t]
\centering
\begin{algorithm}[H]
\caption{Particle Filter for Bayesian Inference on $\Sigma_{t_i}$}\label{alg:particle}
\begin{algorithmic}[1]
\STATE \textbf{Input:} A prior $ p(\Sigma_{t_{0}})\prod_{i=0}^{N-1} p(\Sigma_{t_{i+1}} | \Sigma_{t_{i}})$, observations $y^*_{0:N}$,  \\ 
$~~~~~~~~~~$A differential equation  $\frac{dx(t)}{dt} = f(x(t))$ with a numerical solver
\STATE Randomly generate $K$ particles $ \{ \Sigma_{0|0:0}^k \}_{k=1}^K  \sim p(\Sigma_{t_0})  $ \quad $~~~~~~~~~~~~~~~$ //  Generate Particles at Time $t_0$
\FOR{$i = 0$ to $N-1$}
    \STATE $ \{\Sigma_{i+1 \mid 0:i}^k \}_{k=1}^K \sim p( \Sigma_{t_{i+1}} \mid \Sigma_{i \mid 0:i}^k )$ \quad $~~~~~~~~~~~~~~~~~~~~~~~~~~~~~~~~~~~~~~~~~~~~~~~~~~~~~~~~~~~~~~~~~$ // Prediction Step
    \STATE Solve $\frac{dx(t)}{dt} = f(x(t), \theta)$ to  obtain  a numerical solution $x_{t_{i+1}}$.
    \STATE  For each $k$, evaluate  $p(y^*_{t_{i+1}}|\Sigma_{i+1 \mid 0:i}^k )$ with $x_{t_{i+1}}$, and obtain  \[ ~~~~~~~~~~~~~~~~~~~~~~~~~~~~~~~~~~~~~~~~~~~~~ w_k := \frac{p ( y^*_{t_{i+1}} |\Sigma_{i+1 \mid 0:i}^k  )}{\sum_{\tilde{k}=1}^K p ( y^*_{t_{i+1}}  \left| \right. \Sigma_{i+1 \mid 0:i}^{\tilde{k}}  )} \text{$~~~~~~~~~~~~~~~~~~~~~~~$ // Correction Step} \]    
    \STATE $\mathrm{re}_1, \ldots, \mathrm{re}_K \sim \text{Categorical}(w_1, \ldots, w_K)$.
\STATE 
Update particles by 
\[
~~~~~~~~~~~~~~~~~~~~~~~~~~~~~~~~~~~~~~~~~~~~~~~~~ \Sigma_{i+1|0:i+1}^k := \Sigma_{i+1|0:i}^{\text{re}_k},
~~~~~~~~~~~~~~~~~~~~~~~~~~~~~~~~\text{// Resampling} \]
\[
~~~~~~~~~~~~~~~~~~~~~~~~~~~~~~ 
\Sigma_{0 \mid 0:0}^k = \Sigma_{1 \mid 0:0}^{\mathrm{re}_k}, \;\dots,\;
\Sigma_{j \mid 0:j}^k = \Sigma_{j \mid 0:j}^{\mathrm{re}_k}, \;\dots,\;
\Sigma_{i \mid 0:i}^k = \Sigma_{i \mid 0:i}^{\mathrm{re}_k}.
~~~~~~~~~~~\text{// Smoothing} \]
\ENDFOR
\STATE \textbf{Output:} Particles $\left\{ \left( \Sigma_{0|0}^k, \dots, \Sigma_{N|0:N}^k  \right) \right\}_{k=1}^K$  of $p\left(\Sigma_{0:N} \mid y^*_{0:N}\right)$.
\end{algorithmic}
\end{algorithm}
\hfill
\end{figure}

\subsubsection{Prediction Step}\label{subsubsec:prediction}
Assume that we have the particles $\{\Sigma_{i \mid 0:i}^k \}_{k=0}^K$ approximating the distribution $p\left( \Sigma_{t_i} \mid y_{0:i}^* \right)$.  
Then, the particles $\{\Sigma_{i+1 \mid 0:i}^k \}_{k=1}^K$ approximating $p\left( \Sigma_{t_{i+1}} \mid y_{0:i}^* \right)$ can be generated through
\[
\Sigma_{i+1 \mid 0:i}^k \sim p ( \Sigma_{t_{i+1}} \mid \Sigma_{i \mid 0:i}^k ),
\]
that is, by sampling from the Markov prior $p\left( \Sigma_{t_{i+1}} \mid \Sigma_{t_i} \right)$ with the conditioning $\Sigma_{t_i} = \Sigma_{i \mid 0:i}^k$.

\subsubsection{Correction Step}\label{subsubsec:correction}
Given new observation $y_{i+1}^*$ at the next time $t_{i+1}$, this step generates particles $\{ \Sigma_{i+1|0:i+1}^k \}_{k=1}^K$ that estimate
\begin{equation}
    p\left( \Sigma_{t_{i+1}} \mid y_{0:i}^*, y_{i+1}^* \right) = p\left( \Sigma_{t_{i+1}} \mid y_{0:i+1}^* \right)
    \propto p(y_{i+1}^* \mid \Sigma_{t_{i+1}} ) \cdot p( \Sigma_{t_{i+1}} \mid y_{0:i}^* ). \label{eq:correction_step}
\end{equation}
Recall that in the previous prediction step, we obtained a particle approximation
\( \{ \Sigma_{i+1|0:i}^k \}_{k=1}^K \) of the distribution
\( p( \Sigma_{t_{i+1}} \mid y_{0:i}^* ) \).
Therefore, the distribution $p\left( \Sigma_{t_{i+1}} \mid y_{0:i+1}^* \right)$ can be approximated as
\begin{equation*}
p\left( \Sigma_{t_{i+1}} \mid y_{0:i+1}^* \right)
\propto p(y_{i+1}^* \mid \Sigma_{t_{i+1}} ) \cdot p( \Sigma_{t_{i+1}} \mid y_{0:i}^* )
\approx \sum_{k=1}^K w_k \, \delta_{ \Sigma_{i+1|0:i}^k },
\end{equation*}
where the weights $w_k$ are defined as
\begin{equation}\label{eq:particle_weight}
w_k := \frac{p ( y_{i+1}^* \mid \Sigma_{i+1|0:i}^k )}
{\sum_{\tilde{k}=1}^K p ( y_{i+1}^* \mid \Sigma_{i+1|0:i}^{\tilde{k}} )}.
\end{equation}

We then perform \emph{resampling} to equalize the particle weights $w_i$ \citep{kitagawa1996monte, gordon1993novel}.
Specifically, $K$ indices $\mathrm{re}_1, \dots, \mathrm{re}_K$ ($\in \left\{1, \dots, K \right\}$) are sampled from 
$\categorical(w_1, \ldots, w_K)$, i.e., the distribution in which index $k \in \left\{1, \dots , K \right\}$ is selected with probability $w_k$.
Then, the particles are subsequently updated as
\begin{equation}\label{resmpling}
\Sigma_{i+1|0:i+1}^k := \Sigma_{i+1|0:i}^{\text{re}_k}.
\end{equation}
This procedure yields a new particle set $\{\Sigma_{i+1|0:i+1}^k \}_{k=1}^K$ with uniform weights, 
which approximates the posterior $p\left( \Sigma_{t_{i+1}} \mid y_{0:i+1}^* \right) $.

\begin{remark}
    It is well known that the resampling procedure gradually reduces particle diversity over time: this phenomenon is commonly referred to as \emph{particle degeneracy}.
    This issue becomes particularly severe in high-dimensional latent spaces, where the posterior distribution must be represented by a limited number of particles.
    Various strategies have been proposed to mitigate this problem.
    One such method is the \emph{merging particle filter} \citep{nakano2007merging,van2009particle}, in which multiple particles are generated and then merged to form a single representative particle.
    Another approach is the \emph{localized particle filter} \citep{farchi2018comparison,van2019particle}, based on the assumption that observations depend only on a subset of the latent state.
    Additional variants include the \emph{tempering particle filter} \citep{beskos2014stability}, which employs tempering techniques \citep{neal1996sampling,del2006sequential}, 
    and the \emph{implicit particle filter} 
    \citep{atkins2013implicit,chorin2010implicit},
    which alleviates degeneracy through the design of appropriate proposal distributions.
    Since the primary objective of the present paper is to illustrate an approach for inferring ODE parameters with discretization error quantification within the framework of state-space modelling, rather than to propose a new filtering algorithm, the standard particle filter is employed; nonetheless, the enhanced variants mentioned above could also be utilized.
\end{remark}

\subsubsection{Smoothing}\label{subsubsec:smoothing}
In the operations described so far, the particles $\{\Sigma_{i \mid 0:i}^k \}_{k=1}^K$ obtained at times $t_i = t_0, \dots, t_{N-1}$ are sampled from 
$p\!\left( \Sigma_{t_i} \mid y_{0:i}^* \right)$ rather than from the distribution that incorporates all future observations, 
$p\!\left( \Sigma_{t_i} \mid y_{0:N}^* \right)$. 
This limitation is addressed by a procedure known as \emph{smoothing}.

In a particle filter, a principled way to perform smoothing is to update
\begin{equation}\label{eq_smoothing}
\Sigma_{0 \mid 0:0}^k := \Sigma_{0 \mid 0:0}^{\mathrm{re}_k}, \;\dots,\;
\Sigma_{j \mid 0:j}^k := \Sigma_{j \mid 0:j}^{\mathrm{re}_k}, \;\dots,\;
\Sigma_{i \mid 0:i}^k := \Sigma_{i \mid 0:i}^{\mathrm{re}_k},
\end{equation}
in the resampling step~\eqref{resmpling}, together with 
$\Sigma_{i+1 \mid 0:i+1}^k := \Sigma_{i+1 \mid 0:i}^{\mathrm{re}_k}$. 
Through this operation, the particles $\Sigma_{j \mid 0:j}^k$ ($0 \leq j \leq i$) become samples from 
$p\!\left( \Sigma_{t_j} \mid y_{0:i+1}^* \right)$, which incorporates the observations $y^*_{t_{0}}, \dots , y^*_{t_{i+1}}$ beyond time $j$~\citep{kitagawa1996monte} (see Appendix~\ref{appendix_smoothing} for details).  
By repeating this procedure until $i = N - 1$, the resulting particles $\Sigma_{j \mid 0:j}^k$ approximate draws from 
$p\!\left( \Sigma_{t_j} \mid y^*_{0:N} \right)$.

\begin{remark}
The resampling procedure \eqref{eq_smoothing} simultaneously updates the weight indices for all time steps, thereby reducing the particle diversity even at early stages ($j \ll N$). 
One way to mitigate this issue is \emph{fixed-lag smoothing}~\citep{kitagawa1996monte}, in which resampling is restricted to only the past $L$ steps:
\begin{equation}
\Sigma_{i-L \mid 0:i-L}^k = \Sigma_{i-L \mid 0:i-L}^{\mathrm{re}_k},  ,\dots, 
\Sigma_{i \mid 0:i}^k = \Sigma_{i \mid 0:i}^{\mathrm{re}_k}.
\end{equation}
This approach confines particle degeneration due to resampling to the most recent $L$ time steps.
In addition, the various resampling techniques introduced in the previous section to mitigate particle degeneration can also be applied here to further suppress their effects during smoothing.
\end{remark}

\subsection{Joint Bayesian Inference of ODE Parameters and Discretization Error Variances}\label{subsec:JointBayes}

We return to our objective \eqref{eq:our_goal}---obtaining the posterior
$
p\left( \theta, \Sigma_{t_0}, \dots, \Sigma_{t_N} \mid y_{t_0}, \dots, y_{t_N} \right)
$
over both the discretization error variances $\Sigma_{t_i}$ and the ODE parameter $\theta$.

We now reformulate our target posterior distribution \eqref{eq:our_goal} as a state-space modeling problem, based on the discussion in the previous section. First, we introduce a conditional Markov prior 
$
p(\theta, \Sigma_{t_0}, \dots, \Sigma_{t_N})
$
over $(\theta, \Sigma_{t_0}, \dots, \Sigma_{t_N})$, in which the sequence $(\Sigma_{t_0}, \dots, \Sigma_{t_N})$ satisfies the Markov property conditioned on the model parameter $\theta$:
\[
p(\theta, \Sigma_{t_0}, \dots, \Sigma_{t_N}) 
= p(\theta) \cdot p(\Sigma_{t_0} \mid \theta) \,
\prod_{i=0}^{N-1} p(\Sigma_{t_{i+1}} \mid \Sigma_{t_i}, \theta).
\]
As the likelihood $p(y_{t_0}, \ldots, y_{t_N} \mid \theta, \Sigma_{t_0}, \ldots, \Sigma_{t_N})$ 
can be factorized as 
$\prod_{i=0}^N p(y_{t_i} \mid \theta, \Sigma_{t_i})$
in \eqref{eq:likelihood_with_discretization}, 
the joint model $p(y_{t_0}, \ldots, y_{t_N}, \theta, \Sigma_{t_0}, \ldots, \Sigma_{t_N})$ 
can be expressed as
\begin{align*}
   p(y_{t_0},..., y_{t_N}, \theta, \Sigma_{t_0},..., \Sigma_{t_N})  &=  p\left( y_{t_0}, \dots, y_{t_N} \mid \theta, \Sigma_{t_0}, \dots, \Sigma_{t_N} \right) \cdot p(\theta, \Sigma_{t_0}, \dots, \Sigma_{t_N}) \\
   & = \left( \prod_{i=0}^N p(y_{t_i} \mid \Sigma_{t_i},\theta )\right)\left( p(\theta) \cdotp(\Sigma_{t_0} \mid \theta)\, 
\prod_{i=0}^{N-1} p(\Sigma_{t_{i+1}} \mid \Sigma_{t_i}, \theta) \right)  \\
& = \underbrace{p(\theta)}_{(*)} \cdot \underbrace{%
\Bigl( p(\Sigma_{t_0} \mid \theta) \cdot \prod_{i=0}^N p(y_{t_i} \mid \Sigma_{t_i},\theta ) \cdot  
\prod_{i=0}^{N-1} p(\Sigma_{t_{i+1}} \mid \Sigma_{t_i}, \theta) \Bigr). 
}_{(**)}
\end{align*}
Thus, the joint model 
$
p(y_{t_0}, \ldots, y_{t_N}, \theta, \Sigma_{t_0}, \ldots, \Sigma_{t_N})
$ 
consists of a state-space model $(**)$, 
where both the latent state transitions $p(\Sigma_{t_{i+1}} \mid \Sigma_{t_i}, \theta)$
and the observation model $p(y_{t_i} \mid \Sigma_{t_i}, \theta)$ depend on the model parameter $\theta$, 
together with a prior distribution $p(\theta)~(*)$ (see Figure~\ref{fig:state_space}).  
Here, the observation process 
$p(y_{t_i} \mid \Sigma_{t_i}, \theta)$ is given by
\begin{align}
    p(y_{t_i} \mid {\Sigma}_{t_i}, \theta) &:= \mathcal{N}(H x_{t_i}(\theta), \Gamma + V_{t_i}) \notag \\
    &= \frac{1}{\sqrt{(2\pi)^{d_{\mathcal{Y}}} \lvert \Gamma + V_{t_i} \rvert}}
    \exp\Bigg\{
        -\frac{1}{2} \left( y_{t_i} - H x_{t_i}(\theta) \right)^{\mathsf{T}}
        (\Gamma + V_{t_i})^{-1}
        \left( y_{t_i} - H x_{t_i}(\theta) \right)
    \Bigg\}, 
    \label{eq:ml_ySigma_self}
\end{align}
which is a modified version of the observation process \eqref{eq:ml_ySigma} and compatible to the modified likelihood \eqref{eq:likelihood_with_discretization}
where the numerical solution $x_{t_i}(\theta)$ explicitly depends on the model parameter $\theta$. 
This reformulation implies that 
inferring the posterior
$
p\left( \theta, \Sigma_{t_0}, \dots, \Sigma_{t_N} \mid y_{t_0}, \dots, y_{t_N} \right)$
in \eqref{eq:our_goal}
can be cast as a joint Bayesian inference problem for the latent states $\Sigma_{t_i}$ and the model parameter $\theta$ in the state-space model $(**)$.

\begin{figure}[t]
    \centering

    \begin{tikzpicture}
    \node[circle, draw, text width=1.4cm,align=center] (c1) {$\begin{pmatrix}
\Sigma_{t_{i-1}} \\
\theta_{t_{i-1}}
\end{pmatrix}$};
    \node[circle, draw, text width=1.4cm,align=center, right = 3.6cm of c1] (c2) {$\begin{pmatrix}
\Sigma_{t_i} \\
\theta_{t_i}
\end{pmatrix}$};
    \node[circle, draw, text width=1.3cm,align=center, above = 1.4cm of c1] (c3) {$y_{t_{i-1}}$};
    \node[circle, draw, text width=1.3cm,align=center, above = 1.4cm of c2] (c4) {$y_{t_i}$};
    \node[left = 3.2cm of c1] (c5) {};
    \node[right = 3.2cm of c2] (c6) {};

\draw[-{Stealth[length=2.5mm]}, thick]
  (c1)
  --
  node[above, sloped, yshift=3pt, fill=white, inner sep=1pt]
       {\small{$\Sigma_{t_i} \sim p(\Sigma_{t_i}\mid \Sigma_{t_{i-1}},\theta)$}}
     node[below, sloped, yshift=-3pt, fill=white, inner sep=1pt]
       {\small{$\theta_{t_i} = \theta_{t_{i-1}}$}}
  (c2);
    \draw [-{Stealth[length=2mm]}, thick] (c1)--node[fill=white,sloped,rotate=-90]{$\mathcal{N}(H x_{t_i-1} (\theta_{t_i-1}),\Gamma + V_{t_{i-1}})$} (c3);
    \draw [-{Stealth[length=2mm]}, thick] (c2)--node[fill=white,sloped,rotate=-90]{$\mathcal{N}(H x_{t_i} (\theta_{t_i}),\Gamma + V_{t_i})$} (c4);
\draw[-{Stealth[length=2.5mm]}, thick]
  (c5)
  -- node[above, sloped, yshift=4pt, fill=white, inner sep=1pt,  xshift=-14pt]
       {\small{$\Sigma_{t_{i-1}} \sim p(\Sigma_{t_{i-1}}\mid \Sigma_{t_{i-2}}, \theta)$}}
     node[below, sloped, yshift=-5pt, fill=white, inner sep=1pt,  xshift=-14pt]
       {\small{$\theta_{t_{i-1}} = \theta_{t_{i-2}}$}}
  (c1);
\draw[-{Stealth[length=2.5mm]}, thick]
  (c2)
  -- node[above, sloped, yshift=4pt,   xshift=9pt, fill=white, inner sep=1pt]
       {\small{$\Sigma_{t_{i+1}} \sim p(\Sigma_{t_{i+1}}\mid \Sigma_{t_{i}}, \theta)$}}
     node[below, sloped, yshift=-4pt, xshift=9pt, fill=white, inner sep=1pt]
       {\small{$\theta_{t_{i+1}} = \theta_{t_{i}}$}}
  (c6);
    \end{tikzpicture}
    
    \caption{Self-organizing state-space modeling to perform joint Bayesian inference of the latent state $\Sigma_{t_i}$ and the ODE parameters $\theta$ shown in Figure~\ref{fig:state_space}. In this method, we construct an augmented state-space model by considering the joint latent state $(\Sigma_{t_i}, \theta_{t_i})^{\mathrm{T}}$, and define its transition as $\Sigma_{t_{i+1}} \sim p(\Sigma_{t_{i+1}} \mid \Sigma_{t_i})$ and $\theta_{t_{i+1}} = \theta_{t_i}. $ By solving this alternative state-space model using a particle filter, we obtain the desired posterior distribution (\ref{eq:our_goal}).
}
    \label{fig:self_state_space}
\end{figure}
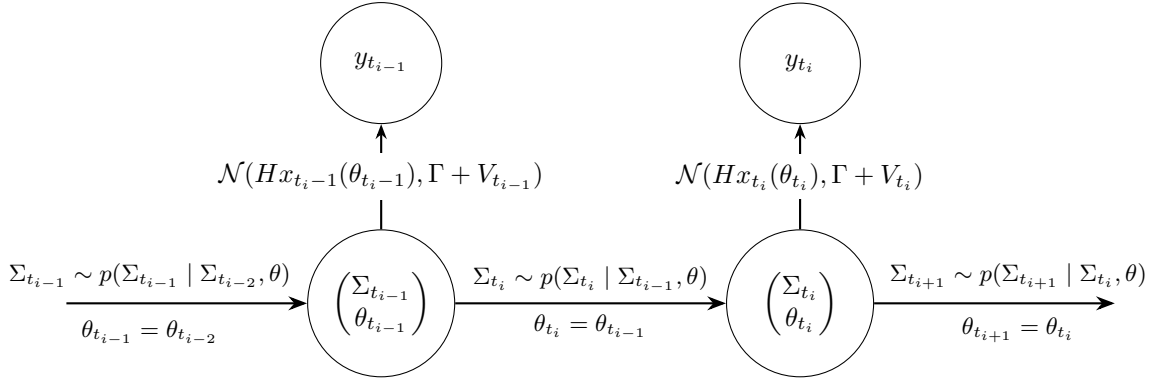

\begin{figure}[t]
\centering
\begin{algorithm}[H]
\caption{Particle Filter for Bayesian Inference on $\Sigma_{t_i}$ and $\theta$ }\label{alg:particle_self}
\begin{algorithmic}[1]
\STATE \textbf{Input:} A prior $p(\theta, \Sigma_{0:N}) = p(\theta) \cdot p(\Sigma_{t_{0}} | \theta)\prod_{i=0}^{N-1} p(\Sigma_{t_{i+1}} | \Sigma_{t_{i}}, \theta)$, observations $y^*_{0:N}$, \\
$~~~~~~~~~~$  A differential equation  $\frac{dx(t)}{dt} = f(x(t), \theta)$ with a numerical solver 
\STATE Generate particles $ \left\{ \left( \Sigma_{0|0:0}^k, \theta_{0|0:0}^{k} \right) \right\}_{i=1}^K \sim p(\Sigma_{t_0}|\theta) \cdot p(\theta) ~~~~~~~~~~~~~~~~~$ //  Generate Particles at Time $t_0$
\FOR{$i = 0$ to $N-1$}
    \STATE Generate particles $ \left\{ \left( \Sigma_{i+1|0:i}^k, \theta_{i+1|0:i}^{k} \right) \right\}_{i=1}^K$ by   \\ $~~~~~~~~~~~~~~~~~~~~~~~~~~~~~~~~~~\Sigma_{i+1|0:i}^{k} \sim p(\Sigma_{t_{i+1} } \mid \Sigma_{i|0:i}^k ), ~~~ \theta_{i+1|0:i}^{k} :=\theta_{i|0:i}^{k} $ \quad $~~~~~~~~~~$ // Prediction Step
    \STATE For each  $k$, solve $\frac{dx(t)}{dt} = f(x(t), \theta_{i+1|0:i}^{k})$ to  obtain a numerical solution $x_{t_{i+1}}(\theta_{i+1|0:i}^{k})$.
    \STATE  For each  $k$, evaluate  $p(y^*_{i+1}|\Sigma_{i+1|0:i}^k, \theta_{i+1|0:i}^{k})$ with $x_{t_{i+1}}(\theta_{i+1|0:i}^{k})$ and obtain \[ ~~~~~~~~~~~~~~~~~~~~~~~~~~~~~~~~~~~~~~ w_k := \frac{p ( y^*_{i+1} |\Sigma_{i+1 \mid 0:i}^k, ~\theta_{i+1 \mid 0:i}^k  )}{\sum_{\tilde{k}=1}^K p ( y^*_{i+1} \left| \right. \Sigma_{i+1 \mid 0:i}^{\tilde{k}} , ~\theta_{i+1 \mid 0:i}^{\tilde{k}} )} \text{$~~~~~~~~~~~~~~~~~$ // Correction Step} \] 
    \STATE $\mathrm{re}_1, \ldots, \mathrm{re}_K \sim \text{Categorical}(w_1, \ldots, w_K)$.
$~~~~~~~~~~~~~~~~~~~~~~~~~~~~~~~~~~~~~~~$ 
\STATE 
Update particles by 
\[ ~~~~~~~~~~~~~~~~~~~~~~~~~~~~~~~~~~~~ \Sigma_{i+1|0:i+1}^{k} := \Sigma_{i+1|0:i}^{\text{re}_k}, ~~
\theta_{i+1|0:i+1}^{k} := \theta_{i+1|0:i}^{\text{re}_k} 
~~~~~~~~~~~~~~~~~~~~~~ \text{// Resampling} \]
\[
~~~~~~~~~~~~~~~~~~~~~~~~~~~~~~ 
\Sigma_{0 \mid 0:0}^k = \Sigma_{1 \mid 0:1}^{\mathrm{re}_k}, \;\dots,\;
\Sigma_{j \mid 0:j}^k = \Sigma_{j \mid 0:j}^{\mathrm{re}_k}, \;\dots,\;
\Sigma_{i \mid 0:i}^k = \Sigma_{i \mid 0:i}^{\mathrm{re}_k}.
~~~~~~~~~~~\text{// Smoothing} \]
\ENDFOR
\STATE \textbf{Output:} Particles $\left\{ \left( \theta_{N|0:N}^k, \Sigma_{0|0}^k, \dots, \Sigma_{N|0:N}^k  \right) \right\}_{k=1}^K$  of $p\left(\theta, \Sigma_{0:N} \mid y^*_{0:N} \right)$.
\end{algorithmic}
\end{algorithm}
\hfill
\end{figure}

Since we have reformulated our objective \eqref{eq:our_goal} as a joint Bayesian inference for the latent states and model parameters within a state-space model, we can exploit existing methods developed for such joint inference in the context of the filtering approach.  
A well-known method of this kind is the \emph{self-organizing state-space model}~\citep{kitagawa1998self-organizing}, which is summarized in Algorithm \ref{alg:particle_self}.
In this method, an alternative state-space formulation, as illustrated in Figure~\ref{fig:self_state_space}, is considered, in which the model parameter $\theta$ is incorporated into the latent state $\Sigma_{t_i}$ by introducing the augmented latent variable 
\begin{equation*}
    \hat{\Sigma}_{t_i} =
\begin{pmatrix}
    \Sigma_{t_i} \\
    \theta_{t_i}
\end{pmatrix},
\end{equation*}
where the parameters $\theta$ are formally treated as time-varying variables $\theta_{t_i}$ with constant transition $\theta_{t_{i+1}} = \theta_{t_i}$, even though they are originally time-invariant.
We define a Markov distribution over $\hat{\Sigma}_{t_i}$ by initializing 
\[
\hat{\Sigma}_{t_0} = (\Sigma_{t_0}, \theta_{t_0})^\top \sim p(\Sigma_{t_0} \mid \theta) \cdot p(\theta)
\] 
at the initial time $t_0$, and specifying the transition distribution $p(\hat{\Sigma}_{t_{i+1}} \mid \hat{\Sigma}_{t_i})$ as
\[
\Sigma_{t_{i+1}} \sim p(\Sigma_{t_{i+1}} \mid \Sigma_{t_i}, \theta_{t_i}), \quad
\theta_{t_{i+1}} = \theta_{t_i}.
\]
For the augmented latent space $\hat{\Sigma}_{t_i} = (\Sigma_{t_i}, \theta_{t_i})^\top$, we define the observation process as 
\[
p(y_{t_i} \mid \hat{\Sigma}_{t_i}) = p(y_{t_i} \mid \Sigma_{t_i}, \theta_{t_i}) = \mathcal{N}\big(H x_{t_i}(\theta_{t_i}), \Gamma + V_{t_i}\big),
\] 
as given in \eqref{eq:ml_ySigma_self}. By applying a particle filter to the augmented state-space model illustrated in Figure~\ref{fig:self_state_space}, 
we can sample from the posterior distribution 
$
p(\theta_{t_N}, \Sigma_{t_0}, \ldots, \Sigma_{t_N} \mid y_{t_0}, \ldots, y_{t_N}).
$ 
Since $\theta_{t_i}$ follows a constant transition, the sequence $(\theta_{t_N}, \Sigma_{t_0}, \ldots, \Sigma_{t_N})$ of latent variables in the augmented state-space model (Figure~\ref{fig:self_state_space}) follows the original prior distribution $p(\theta, \Sigma_{t_0}, \ldots, \Sigma_{t_N})$:
\begin{align*}
(\theta_{t_N}, \Sigma_{t_0}, \ldots, \Sigma_{t_N}) \sim p(\theta_{t_N}, \Sigma_{t_0}, \ldots, \Sigma_{t_N}) = p(\theta_{t_0}, \Sigma_{t_0}, \ldots, \Sigma_{t_N}) = p(\theta, \Sigma_{t_0}, \ldots, \Sigma_{t_N}).
\end{align*} 
Correspondingly, the resulting posterior distribution 
$
p(\theta_{t_N}, \Sigma_{t_0}, \ldots, \Sigma_{t_N} \mid y_{t_0}, \ldots, y_{t_N})
$ 
is identical to our target posterior 
$
p(\theta, \Sigma_{t_0}, \ldots, \Sigma_{t_N} \mid y_{t_0}, \ldots, y_{t_N}).
$

\section{Markov Prior on Discretization Error Variances}\label{sec:prior}

It remains to construct an appropriate Markov prior for the discretization error variances.
There are various possible constructions depending on the problem setting, such as the type of ODEs, the numerical method employed, and the available observations.
In this section, we review the local propagation of the global error and, based on this analysis, propose a suitable prior.

\subsection{Local Propagation of the Global Error}

Here, we briefly review the standard theory of error analysis.
We distinguish between the \emph{local error} and the \emph{global error}, and discuss the propagation of the latter.

In this subsection, we omit $\theta$ for simplicity.
We also make the following assumptions:
\begin{itemize}
\item The observation time interval is constant, i.e. $t_{i+1} - t_i = \text{const.}$
\item The step size used in the time integrator is denoted by $h$, and $t_{i+1} - t_i = kh$ for a positive integer $k$.
\end{itemize}
The first assumption is introduced solely for clarity of presentation.
If the second assumption does not hold in practical applications, the solution at $t = t_i$ can be approximated, for example, by interpolation techniques using neighboring numerical solutions~\citep{hairer2020solving}.

The numerical solution at $t=t_i+jh$ is denoted by $x_{i,j}$ so that $x_{i+1} = x_{i,k}$.
The time-$h$ flow for the ODE is denoted by $\phi_h$,
and the time-$h$ flow of the numerical solver 
by $\psi_h$.
The local error is defined as the numerical error induced in a single time step, i.e. $\psi_h(x) - \phi_h (x)$.
We define 
\begin{equation*}
    L(t_{i,j}) := \psi_h (x_{i,j}) - \phi_h(x_{i,j}) 
    = x_{i,j+1} - \phi_h(x_{i,j}) .
\end{equation*}

On the other hand, the global error is defined by
\begin{equation*}
    G(t_{i,j}) = x_{i,j} - x(t_i+jh).
\end{equation*}
We now examine how the global error propagates.
Observe that
\begin{align*}
    G(t_{i,j+1})
    & = x_{i,j+1} - x(t_i+(j+1)h)
    = \phi_h(x_{i,j}) - \phi_h(x(t_i+jh)) + x_{i,j+1} - \phi_h(x_{i,j})  \\
    & = \phi_h(x_{i,j}) - \phi_h(x(t_i+jh)) + L(t_{i,j}).
\end{align*}
To analyze the difference $\phi_h(x_{i,j}) - \phi_h(x(t_i+jh))$,
we consider the variational equation
\begin{equation*}
    \frac{d\delta(t)}{dt} = \big(\nabla_x f(x(t))\big) \delta (t), \quad 
    \delta(0) = x_{i,j} - x(t_i+jh).
\end{equation*}
For the solution to this equation, it follows that
\begin{equation*}
    \phi_h(x_{i,j}) - \phi_h(x(t_i+jh))
     = \delta(h).
\end{equation*}
By expanding $\delta(h)$ in a Taylor series, we obtaine an approximation
\begin{equation*}
    \delta(h) \approx \delta (0) + h \big( \nabla_x f(x_{i,j})\big) \delta(0).
\end{equation*}
Hence, the propagation of the global error satisfies
\begin{equation}
    \label{eq:ge_propagation1}
    G(t_{i,j+1})
    = (I+ h \nabla_x f(x_{i,j}) + \mathcal{O}(h^2)) G(t_{i,j}) + L(t_{i,j}).
\end{equation}

\subsection{A Markov Prior}\label{subsec:markov_prior}

Constructing a suitable Markov prior for discretization error variances is crucial.  
Note that the prior introduced in the previous study~\citep{miyatake2025quantifying} does not satisfy the Markov property and is therefore not appropriate for our setting.
In this section, we propose a new Markov prior, motivated by the preceding subsection.

Throughout, we assume that the discretization error variance is diagonal and takes the form
\begin{equation}\label{eq:assumption_dist_error}
\Sigma_{t_{i}} = \mathrm{diag}\Big( \left( \sigma_{t_{i}}^1\right)^2,\dots,\big( \sigma_{t_{i}}^{d_\X}\big)^2 \Big),
\end{equation}
and define the vector
\begin{equation}\label{eq:discretization_vector}
\mathbf{\sigma}_{t_{i+1}} := \bigl(\sigma_{t_{i+1}}^1,\dots,\sigma_{t_{i+1}}^{d_\X}\bigr)^\top \in \mathbb{R}_{> 0}^{d_\X}.
\end{equation}
We also use the notation $\Sigma_{t_{i,j}}$ and $\sigma_{t_{i,j}}$ with their obvious meanings.

Note that the discretization error variances $\sigma_{t_{i,j}}$ in our context can be interpreted as a model for the global error $G(t_{i,j})$.  
By identifying $G(t_{i,j})$ in \eqref{eq:ge_propagation1} with $\sigma_{t_{i,j}}$, we arrive at the following Markov prior, in which  
the discretization error variance $\sigma_{t_{i,j}}$ evolves according to the probabilistic transition below as time advances by $h$:
\begin{equation}\label{eq:proposed_prior}
    \sigma_{t_{i,j+1}} = M_{i, j} \cdot \sigma_{t_{i,j}} + |\tilde{L}(t_{i,j})|,
    \qquad M_{i, j} \sim P, \quad 
    (j = 0, \dots, k-1).
\end{equation}
Here, $P$ denotes a distribution over matrices $M_{i, j} \in \mathbb{R}_{>0}^{d_\mathcal{X} \times d_\mathcal{X}}$. 
Each component of $M_{i, j}$ is constrained to be nonnegative.
The value of $\tilde{L}(t_{i,j})$ is an approximated value of $L(t_{i,j})$.
This can be typically estimated by, for example, either $\tilde{L}(t_{i,j}) = \psi_h (x_{i,j}) - \psi_{h/2} (x_{i,j})$
or 
$\tilde{L}(t_{i,j}) = \psi_h (x_{i,j}) - \tilde{\psi}_h (x_{i,j})$, where $\tilde{\psi}_h$ denotes a higher order numerical solver.
These techniques are commonly used to control the step size during time integration~\citep{hairer2020solving}.
The absolute value $|\cdot|$ of $L(t_{i,j})$ is taken componentwise to ensure consistency with \eqref{eq:discretization_vector}, which was defined to be nonnegative.

The proposed Markov prior involves a distribution $M \sim P_\lambda$, which typically depends on a hyperparameter $\lambda$.  
In our experiment, for instance, we set $P$ as $m \cdot I$, where $m$ is drawn from a Gamma distribution $m \sim \gammadistribution (\alpha, \beta)$; in this case, the hyperparameter $\lambda = (\alpha, \beta)$ must be properly chosen. In the context of particle filtering, this hyperparameter can be tuned by using an empirical Bayes approach, where the marginal likelihood $p(y^*(t_{0:N}) \mid \lambda) $ is estimated via particle methods~\citep{doucet2001sequential,kitagawa1996monte}.

As shown in the next section, the empirically optimal parameters often satisfy the mean condition
$
    \mathbb{E}_{M \sim P_\lambda}[ M ] \approx I,
$
where $I$ denotes the identity matrix. 
This observation suggests that it suffices to restrict attention to hyperparameter candidates $\lambda$ such that
$
\mathbb{E}_{M \sim P_\lambda}[ M ] = I.
$
In the next subsection, we examine this empirical finding from a theoretical perspective.

\begin{remark}\label{remark:m_expectation}
    We have introduced the matrix $M_{i, j}$ in \eqref{eq:proposed_prior} expecting that it plays a similar role to the term $(I+ h \nabla_x f(x_{i,j}) + \mathcal{O}(h^2))$ in \eqref{eq:ge_propagation1}.
    In this viewpoint,
    assuming the non-negativity does not pose a problem.
    Moreover, under some smoothness conditions on $f$, it follows that $(I+ h \nabla_x f(x_{i,j}) + \mathcal{O}(h^2)) \to I$ as $h\to +0$, which supports empirical finding mentioned above.
    
    However, taking the absolute value $|\tilde{L}(t_{i,j})|$ of the estimated local error may lead to overestimation.
    In view of this issue, there is potential to develop a more appropriate model or prior.
    Nonetheless, we emphasize that the current approach remains effective in many situations, due to the following:
    \begin{itemize}
        \item as shown in the next subsection, the prior satisfies a desirable asymptotic property;
        \item in phases where the global error increases, adding a positive value appears natural;
        \item while there might be phases during which the absolute value of the global error decreases (when evaluated componentwise), such phenomena can be captured in the corresponding posterior distribution.
    \end{itemize}
\end{remark}

\subsection{Asymptotics of the Prior}

We investigate asymptotic properties of the proposed prior \eqref{eq:proposed_prior} as the step size $h$ approaches zero. 
For simplicity in the theoretical analysis, we identify $\tilde{L}(t_{i,j})$ with the exact local error $L(t_{i,j})$, although in practice we estimate it using two numerical solvers. To explicitly show the step-size dependence of the proposed prior,
we write the prior as

\begin{equation}\label{eq:prior_stepsize}
    p_h \left( \sigma_{t_{0, 0}}, \ldots, \sigma_{t_{N, 0}} \right)= p_h  \left( \sigma_{t_0}, \ldots, \sigma_{t_N})
    = p_h (\sigma_{t_0} \right) \cdot \prod_{i=0}^{N-1} p_h (\sigma_{t_{i+1}} \mid \sigma_{t_i}).
\end{equation}

Numerical solvers are typically designed so that the numerical error vanishes asymptotically as the step size \( h \to + 0 \). 
Accordingly, the proposed prior \( p_h(\sigma_{t_i}) \) should be defined so that \( \sigma_{t_i} \to 0 \) as \( h \to + 0 \) in the sense of probabilistic convergence. This asymptotic property can be guaranteed by the following theorem, which also provides the corresponding convergence rate:

\begin{theo}[Convergence rate in the step-size limit]\label{theo_convergence_rate} Assume that the following conditions (I) $\sim$ (III) hold:
\begin{itemize}
    \item[(I)] A distribution $P$ on matrices $M$ satisfies $\mathbb{E}_{M \sim P}[\| I -  M \|_F^2] \leq   Lh^2$ for some constant $L > 0$.
    \item[(II)]\( L(t_{i, j}) = \mathcal{O}(h^{\alpha+1}) \) for \( 0 \leq i \leq N-1 \) and  \(  0 \leq j \leq k-1 \).

    \item[(III)] A prior $p_h(\sigma_{t_0})$ on initial time $t_0$ satisfies $\mathbb{E}_{p_h(\sigma_{t_0})} [\| \sigma_{t_0} \|^2] = \mathcal{O} (h^{2\beta})$ for  $\beta \geq 0$.

\end{itemize}
Then, for $i \geq 1$, we have
\begin{equation}
   \sigma_{t_i}   = \mathcal{O}_p (h^{\min(\alpha, \beta)})~~~~~~~(h\rightarrow +0).
\end{equation}

\end{theo}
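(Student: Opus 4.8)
The plan is to reduce the statement to a deterministic first-moment bound and then invoke Markov's inequality. Recall that $\sigma_{t_i} = \mathcal{O}_p(h^{\min(\alpha,\beta)})$ means $h^{-\min(\alpha,\beta)}\|\sigma_{t_i}\|$ is stochastically bounded, so it suffices to show $\bE\|\sigma_{t_i}\| = \mathcal{O}(h^{\min(\alpha,\beta)})$: once this holds, $\bP(\|\sigma_{t_i}\| > K h^{\min(\alpha,\beta)}) \le \bE\|\sigma_{t_i}\|/(K h^{\min(\alpha,\beta)}) \le C/K$, which is uniformly small in $h$ for $K$ large. First I would flatten the nested index $(i,j)$ into a single running index $n$, so that $s_n$ denotes the error vector after $n$ elementary steps of size $h$, with $s_0 = \sigma_{t_0}$ and $\sigma_{t_i} = s_{ik}$; the recursion \eqref{eq:proposed_prior} becomes $s_{n+1} = M_n s_n + b_n$, where the $M_n$ are i.i.d. draws from $P$ and $b_n := |L(t_{i,j})|$ is deterministic (we identify $\tilde L$ with the exact local error). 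The structural facts I would use are that $s_n$ is a function of $\sigma_{t_0}$ and $M_0,\dots,M_{n-1}$ only, hence independent of $M_n$, and that the number of elementary steps from $t_0$ to $t_i$ is $n = ik = (t_i-t_0)/h$.

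Next I would set up a scalar recursion for $u_n := \bE\|s_n\|$. Writing $M_n s_n = s_n + (M_n - I)s_n$ and conditioning on $s_n$, the independence of $M_n$ and $s_n$ gives $\bE[\,\|M_n s_n\| \mid s_n\,] \le (1 + \bE\|M_n - I\|_{\mathrm{op}})\|s_n\|$. Since $\|\cdot\|_{\mathrm{op}} \le \|\cdot\|_F$ and, by Jensen, $\bE\|M_n - I\|_F \le (\bE\|M_n-I\|_F^2)^{1/2} \le \sqrt{L}\,h$ by assumption (I), taking expectations yields $u_{n+1} \le (1+\sqrt{L}\,h)\,u_n + \|b_n\|$. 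Assumption (II) supplies a uniform bound $\|b_n\| \le B = \mathcal{O}(h^{\alpha+1})$ over all elementary steps.

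Solving this linear recursion gives $u_n \le (1+\sqrt{L}\,h)^n u_0 + B\,\frac{(1+\sqrt{L}\,h)^n - 1}{\sqrt{L}\,h}$. The crucial observation is that with $n = (t_i-t_0)/h$ the growth factor stays bounded as $h\to+0$, since $(1+\sqrt{L}\,h)^n \le e^{\sqrt{L}\,h\,n} = e^{\sqrt{L}(t_i-t_0)}$ is a constant independent of $h$. Consequently the first term is $\mathcal{O}(u_0)$, and assumption (III) with Jensen gives $u_0 = \bE\|\sigma_{t_0}\| \le (\bE\|\sigma_{t_0}\|^2)^{1/2} = \mathcal{O}(h^\beta)$; the second term is bounded by $B \cdot \frac{e^{\sqrt{L}(t_i-t_0)}-1}{\sqrt{L}\,h} = \mathcal{O}(h^{\alpha+1}) \cdot \mathcal{O}(h^{-1}) = \mathcal{O}(h^\alpha)$. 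Adding the two contributions yields $\bE\|\sigma_{t_i}\| = \mathcal{O}(h^\beta) + \mathcal{O}(h^\alpha) = \mathcal{O}(h^{\min(\alpha,\beta)})$, and Markov's inequality completes the argument.

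The main obstacle is the middle step: controlling the product of $n = \Theta(h^{-1})$ random matrices, each only $\mathcal{O}(h)$-close to the identity. The argument hinges on the precise scaling in assumption (I), namely $\bE\|I-M\|_F^2 = \mathcal{O}(h^2)$ (a typical per-step deviation of order $h$); this is exactly the regime in which the compounded factor $(1+\sqrt{L}\,h)^{1/h}$ converges to the finite limit $e^{\sqrt{L}(t_i-t_0)}$ rather than diverging. A weaker scaling such as $\bE\|I-M\|_F^2 = \mathcal{O}(h)$ would make the multiplicative error blow up exponentially in $h^{-1}$ and the conclusion would fail. A minor technical point to verify carefully is the independence $M_n \perp s_n$ that licenses the conditional-expectation factorization, together with the uniformity (over the $\Theta(h^{-1})$ steps) of the bounds in (I) and (II).
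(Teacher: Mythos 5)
Your proof is correct, and it reaches the same conclusion by a genuinely different moment computation. The paper works with second moments: it sets $e_{i,j}=\mathbb{E}\|\sigma_{t_{i,j}}\|^2$, expands $\|M\sigma+|L|\|^2$, uses condition (I) directly to get the per-step factor $(1+Lh^2)$, unrolls the recursion into a discrete Gr\"onwall bound $\exp\bigl(L(t_i-t_0)^2\bigr)$, and finishes with Chebyshev. You instead work with first moments $u_n=\mathbb{E}\|s_n\|$, use the triangle inequality plus $\|\cdot\|_{\mathrm{op}}\le\|\cdot\|_F$ and Jensen to convert (I) into the per-step factor $(1+\sqrt{L}\,h)$, obtain the Gr\"onwall constant $\mathrm{e}^{\sqrt{L}(t_i-t_0)}$, and finish with Markov; both factors are bounded uniformly in $h$, so both routes deliver $\mathcal{O}(h^{\min(\alpha,\beta)})$ for the relevant moment and hence the $\mathcal{O}_p$ claim. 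Your route buys some robustness: the triangle inequality for norms is exact, whereas the paper's squared-norm expansions $\mathbb{E}\|M\sigma+|L|\|^2\le\mathbb{E}\|M\sigma\|^2+\|L\|^2$ and $\mathbb{E}\|I-(I-M)\|_F^2\le\|I\|_F^2+\mathbb{E}\|I-M\|_F^2$ discard cross terms whose signs need justification (and $\|I\|_F^2=d_{\mathcal{X}}$, not $1$, in dimension greater than one), so the first-moment argument avoids those delicate steps entirely. What the second-moment route buys in exchange is a genuine $L^2$ bound on $\sigma_{t_i}$, which is stronger information than the $L^1$ bound you obtain, though it is not needed for the stated $\mathcal{O}_p$ conclusion. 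The two points you flag as needing care --- the independence of $M_n$ from $s_n$, and the uniformity over the $\Theta(h^{-1})$ elementary steps of the constants in (I) and (II) --- are exactly the points the paper also relies on (the latter via $C=\max_{i,j}C_{i,j}$), so your treatment is consistent with theirs.
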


Condition (I) requires that the distribution $P$ be constructed such that
$
\mathbb{E}_{M \sim P}\big[\| I - M \|_F^2\big] = \mathcal{O}(h^2).
$
This condition can be easily satisfied by defining a distribution $P$ for which an analytical expression of 
$\mathbb{E}_{M \sim P}\big[\| I - M \|_F^2\big]$ is available, including the one used in our experiments. This condition also requires that \( P \) should converge to a Dirac measure at \( I \) in the limit \( h \to +0 \), which is consistent with Remark~\ref{remark:m_expectation}.
Condition (II) defines a given numerical method to be of order $\alpha$.
Condition (III) indicates that the choice of \(\beta\) in the proposed prior at \(t_1\) affects the convergence rate: the prior achieves \(\mathcal{O}_p(h^{\alpha})\) if \(\beta < \alpha\), and \(\mathcal{O}_p(h^{\beta})\) otherwise.

This theorem shows that the proposed prior is likely to converge in probability at the same rate as the global error $G(t_{i, j}) = x_{t_{i, j}} - x(t_{i, j})$. It is well known that, for many numerical methods, if the local error $L(t_{i, j})= \mathcal{O} (h^{\alpha+1})$, then the global error $G(t_{i, j})$ is $\mathcal{O}(h^\alpha)$~\citep{butcher2016numerical, hairer2020solving, iserles2009first}. Since the discretization error variance in this study corresponds to the global error, the proposed prior is desired to be $\mathcal{O}_p (h^\alpha)$, in order to incorporate this numerical insight into the prior. This theorem shows that the objective is achieved by designing~$P$ to satisfy condition~(I) and by setting the initial distribution~$p_h(\sigma_{t_0})$ such that $\mathbb{E}_{p_h(\sigma_{t_0})}[\| \sigma_{t_0}\|^2] = \mathcal{O}(2^\beta)$ with $ \beta \geq \alpha$.

\section{Experiments}
\label{sec:experiments}

We evaluate the performance of the proposed method using two examples: the pendulum system and the FitzHugh--Nagumo model.
These are representative cases where coarse numerical integration, combined with parameter values that deviate significantly from the true ones, can coincidentally produce trajectories that closely match the observations or the exact solution (obtained using the true parameters).

\begin{figure}[t] 
  \centering

  \begin{subfigure}[b]{0.49\textwidth}
    \includegraphics[width=\linewidth]{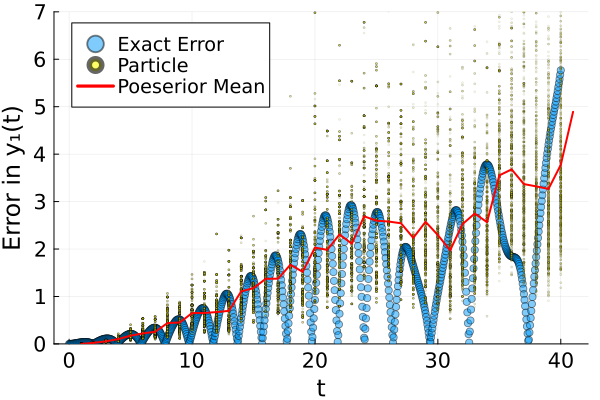}
  \end{subfigure}
  \hfill
  \begin{subfigure}[b]{0.49\textwidth}
    \includegraphics[width=\linewidth]{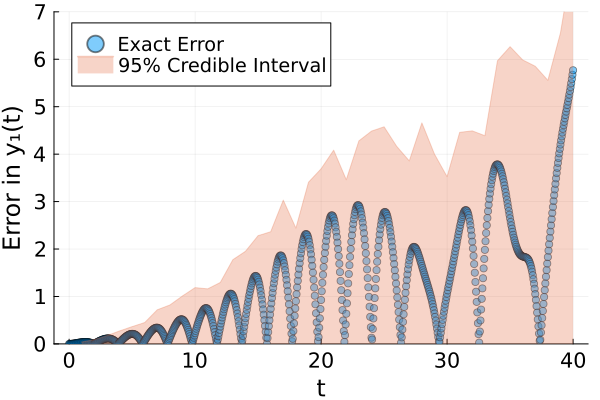}
  \end{subfigure}
   \caption{Discretization error quantification results for $y_1(t)$ in the pendulum system. 
\textbf{Left:} the first component $\sigma_{t_i}^1$ of the particle 
$\sigma_{t_i} = (\sigma_{t_i}^1, \sigma_{t_i}^2)^\mathsf{T}$. 
\textbf{Right:} the $95\%$ credible interval evaluated using samples 
$r_{t_i}^1 \sim \mathcal{N}(0, (\sigma_{t_i}^1)^2)$ drawn from the posterior predictive distribution. }
  \label{fig:Pendulum_particle_noparam}

  \begin{subfigure}[b]{0.49\textwidth}
    \includegraphics[width=\linewidth]{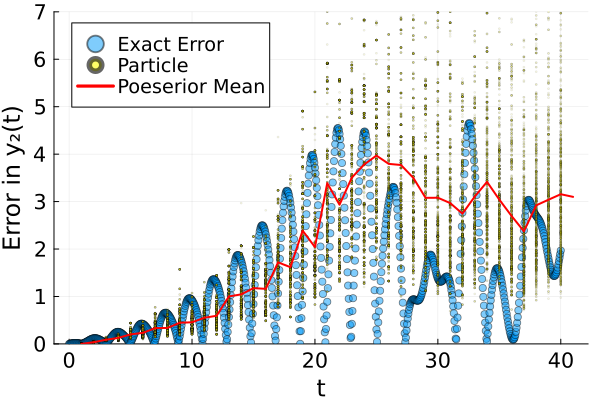}
  \end{subfigure}
  \hfill
  \begin{subfigure}[b]{0.49\textwidth}
    \includegraphics[width=\linewidth]{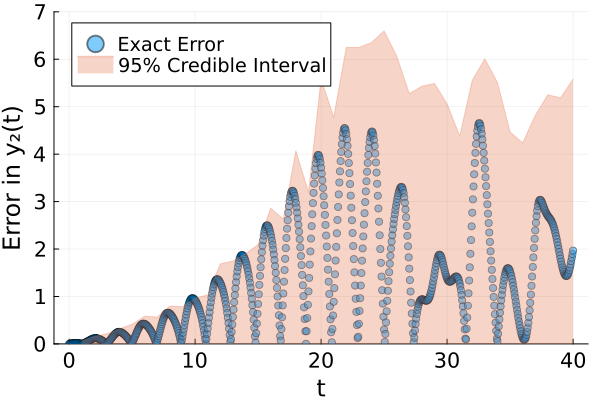}
  \end{subfigure}

  \caption{Discretization error quantification results for $y_2(t)$ in the pendulum system. 
\textbf{Left:} the second component $\sigma_{t_i}^2$ of the particle 
$\sigma_{t_i} = (\sigma_{t_i}^1, \sigma_{t_i}^2)^\mathsf{T}$. 
\textbf{Right:} the $95\%$ credible interval evaluated using samples 
$r_{t_i}^2 \sim \mathcal{N}(0, (\sigma_{t_i}^2)^2)$ drawn from the posterior predictive distribution. }
  \label{fig:Pendulum_posterior_predictive}
\end{figure}

For both cases, we employ the explicit Euler method as our numerical solver, and also use Runge's method (the explicit trapezoidal rule) to estimate the local discretization errors.
The explicit Euler method is chosen to represent a scenario in which discretization errors may be significant.
The observations are generated by adding randomly sampled noise to the reference solution, which is computed using a sufficiently accurate solver.
The reference solution is also used to calculate the \emph{exact} discretization errors.
All numerical experiments were implemented in \texttt{Julia}.

\subsection{Pendulum System}
We consider the pendulum system $y''(t) = (-g/L)\sin y(t)$,
where $g := 9.81 $ denotes the gravitational acceleration.  
This system involves a parameter $\theta = L$ that is to be inferred from observations.  
The system is equivalent to the first-order system
\begin{equation}\label{eq:pendulum_reduce}
\frac{d}{dt}
\begin{bmatrix}
y_1 (t) \\
y_2 (t)
\end{bmatrix}
=
\begin{bmatrix}
y_2 (t) \\[2ex]
-\dfrac{g}{L} \sin y_1 (t)
\end{bmatrix}.
\end{equation}
We employ the explicit Euler method with step size $h=0.05$. The observation operator $H$ is set to $\mathrm{diag} (3.0, 3.0)$.

\begin{table}[t]
  \centering
  \scriptsize
  \resizebox{0.55\textwidth}{!}{%
  \begin{tabular}{|c|c|c|c|}
    \hline
    Rank & $(\alpha,\beta)$ & $\alpha \cdot \beta$ & Log-likelihood \\ \hline
     1 & $(335.000,\ 0.0030)$ & $1.0050$ & $-241.806424$ \\ \hline
     2 & $(365.000,\ 0.0027)$ & $1.0037$ & $-242.531858$ \\ \hline
     3 & $(670.000,\ 0.0015)$ & $1.0050$ & $-243.266446$ \\ \hline
     4 & $(155.000,\ 0.0065)$ & $1.0075$ & $-243.994101$ \\ \hline
     5 & $(175.000,\ 0.0057)$ & $1.0063$ & $-244.423485$ \\ \hline
     6 & $(445.000,\ 0.0022)$ & $1.0012$ & $-245.456227$ \\ \hline
     7 & $(130.000,\ 0.0077)$ & $1.0075$ & $-245.996058$ \\ \hline
     8 & $(575.000,\ 0.0018)$ & $1.0063$ & $-246.007687$ \\ \hline
     9 & $(310.000,\ 0.0032)$ & $1.0075$ & $-246.112785$ \\ \hline
    10 & $(115.000,\ 0.0088)$ & $1.0063$ & $-246.484862$ \\ \hline
 \end{tabular}%
  }
  \caption{Top 10 parameter pairs $(\alpha, \beta)$ with their corresponding log-likelihood 
values $\log p(y^*_{1:40} \mid \alpha, \beta)$ in the experiment for the pendulum system.}
\label{tb:oscillation_parameters_top20}
\end{table}

\begin{figure}[t]
\centering

\begin{subfigure}{0.495\textwidth}
    \centering
    \includegraphics[width=\textwidth]{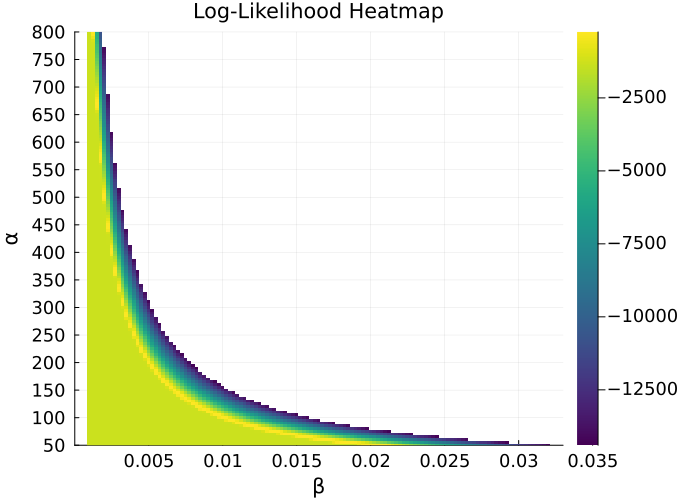}
\end{subfigure}
\hfill
\begin{subfigure}{0.495\textwidth}
    \centering
    \includegraphics[width=\textwidth]{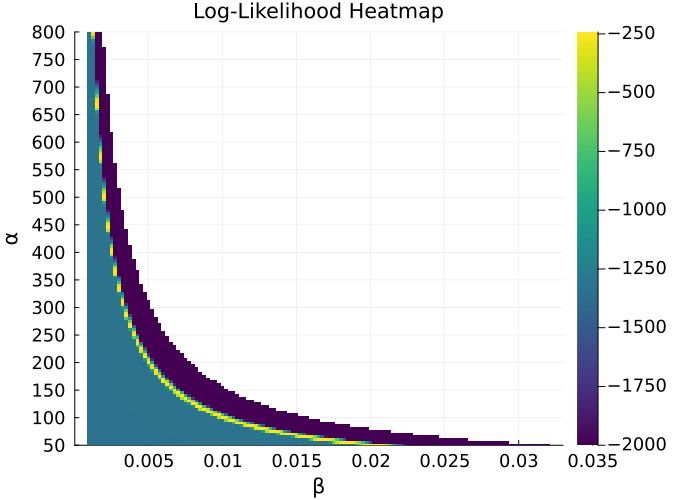}
\end{subfigure}

\caption{
Log-likelihood heatmaps in the pendulum system. 
White regions indicate that the likelihood is effectively zero (i.e., output of the log-likelihood is \texttt{-Inf}).
Left: heatmap over all candidate hyperparameter settings. 
Right: heatmap with the color scale fixed below $-2000$ to emphasize higher-likelihood regions.
}
\label{fig:Pendulum_heatmap}
\end{figure}

\subsubsection{Discretization Error Quantification}

We first assess the performance of the proposed method (described in Section~\ref{subsec:state-space}) with the parameter $L$ fixed at $L^* = 3.0$.
Observations are given at $t=1,2,\dots,40$ with the covariance matrix  given by $\Gamma=\mathrm{diag}(1.0^2, 1.0^2) $. 
The number of particles is set to \( 1000 \), and the distribution \( P \) is fixed to \( m \cdot I \) with \( m \sim \mathrm{Gam}(\alpha, \beta) \).  
The hyperparameters $ (\alpha, \beta) $ are selected from the candidate sets
$
\alpha \in \{ 50.0, 55, 60, \ldots, 800.0 \}$ and 
$\beta \in \{ 0.001, 0.00125, 0.0015, \ldots, 0.033 \}$,
yielding a total of $151 \cdot 129 = 19,479$ candidates. 

Figures \ref{fig:Pendulum_particle_noparam} and \ref{fig:Pendulum_posterior_predictive} 
show the discretization error quantification results for $y_1(t)$ and $y_2(t)$ in 
\eqref{eq:pendulum_reduce}, respectively. 
For both cases, the absolute values of the exact discretization errors are plotted 
(``Exact Error'' in the figures). 
The left panels of Figures~\ref{fig:Pendulum_particle_noparam} and  \ref{fig:Pendulum_posterior_predictive} present the first component 
$\sigma_{t_i}^1$ and the second component $\sigma_{t_i}^2$ of the particles 
$\sigma_{t_i} = (\sigma_{t_i}^1, \sigma_{t_i}^2)^\mathsf{T}$ obtained by the proposed method. Recalling that the discretization error variance 
$\Sigma_{t_i}$ for $(y_1(t), y_2(t))^\mathsf{T}$ is expressed as
$
  \mathrm{diag}\!\left( (\sigma_{t_i}^1)^2,\, (\sigma_{t_i}^2)^2 \right)
$
as discussed in~\eqref{eq:assumption_dist_error},
$(\sigma_{t_i}^1)^2$ and $(\sigma_{t_i}^2)^2$ correspond to the discretization error variances 
for $y_1(t)$ and $y_2(t)$, respectively. 
We observe that the particles successfully capture the temporal trend of the exact errors. 
In particular, the decrease in the exact errors around $t = 30$ is accurately reflected via the proposed method. 
Such behavior cannot be captured by previous approaches~\citep{MARUMO2024modelling, Matsuda2019estimation,  miyatake2025quantifying}, which assume that the discretization error 
increases monotonically. Figures \ref{fig:Pendulum_particle_noparam} and \ref{fig:Pendulum_posterior_predictive} also depict the $95\%$ credible interval of the 
discretization errors, drawn from the posterior predictive 
distribution. Specifically, each $r_{t_i}^1 $ and $r_{t_i}^2$ is sampled from normal distributions 
$\mathcal{N}(0, (\sigma_{t_i}^1)^2)$ and $\mathcal{N}(0, (\sigma_{t_i}^2)^2)$, where each 
variance parameter $\sigma_{t_i}$ is randomly selected from the set of particles obtained 
by the proposed method. 
We can confirm that a substantial proportion of the exact discretization errors lie within this credible interval.

\begin{figure}[t] 
  \centering

  \begin{subfigure}[b]{0.49\textwidth}
    \includegraphics[width=\linewidth]{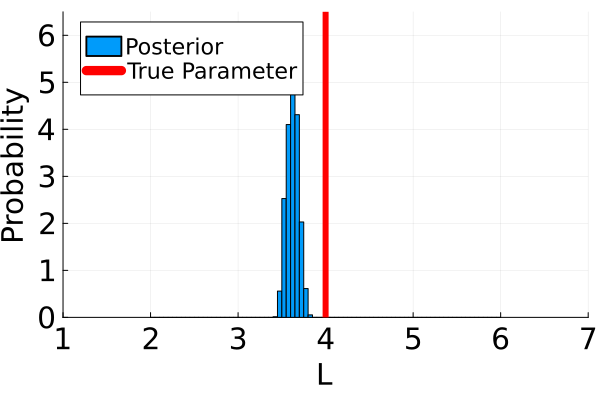}
  \end{subfigure}
  \hfill
  \begin{subfigure}[b]{0.49\textwidth}
    \includegraphics[width=\linewidth]{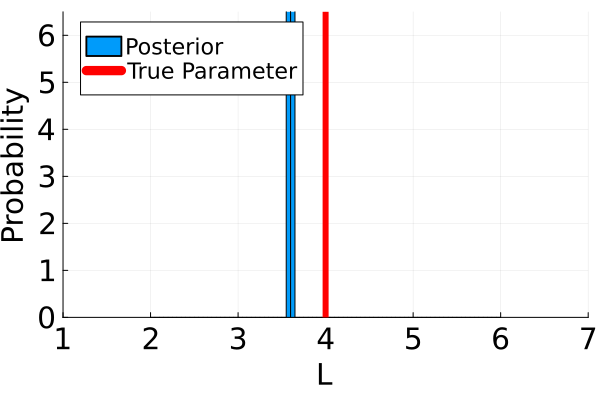}
  \end{subfigure}
\caption{Posterior distributions of the ODE parameter $L$ for the pendulum system. The left panel shows the posterior obtained using Algorithm \ref{alg:particle_self} with the proposed prior on the discretization error variances, whereas the right panel shows the posterior obtained by applying Algorithm \ref{alg:particle_self} with $\sigma_{t_i} = (0, 0)^\mathsf{T}$ for all $t_i$. } \label{fig:Pendulum_ODEparam_poterior}
\end{figure}

We  examine the behavior of the hyperparameters $(\alpha, \beta)$ selected by the empirical Bayesian approach.
Note that under our setting, where $M = m \cdot I$ with $m \sim \mathrm{Gam}(\alpha, \beta)$, we have
\[
\mathbb{E}_{M \sim P_{(\alpha, \beta)}}[M] 
= \mathrm{diag}\!\left(
\mathbb{E}_{m \sim \mathrm{Gam}(\alpha, \beta)}[m],
\ldots,
\mathbb{E}_{m \sim \mathrm{Gam}(\alpha, \beta)}[m]
\right)
= \mathrm{diag}(\alpha \cdot \beta, \ldots, \alpha\cdot \beta).
\]
Table~\ref{tb:oscillation_parameters_top20} lists the top 10 parameter pairs selected from the $19,\!749$ candidates.
We observe that the selected parameters tend to satisfy $\alpha \cdot \beta \approx 1$, which is consistent with the theoretical implications discussed in Section~\ref{sec:prior}.
Figure~\ref{fig:Pendulum_heatmap} shows the heatmap of the log-likelihood for each parameter candidate.
We find that regions of low likelihood align with the curve $\alpha \cdot \beta = 1$.

\begin{figure}[t] 
  \centering

  \begin{subfigure}[b]{0.49\textwidth}
    \includegraphics[width=\linewidth]{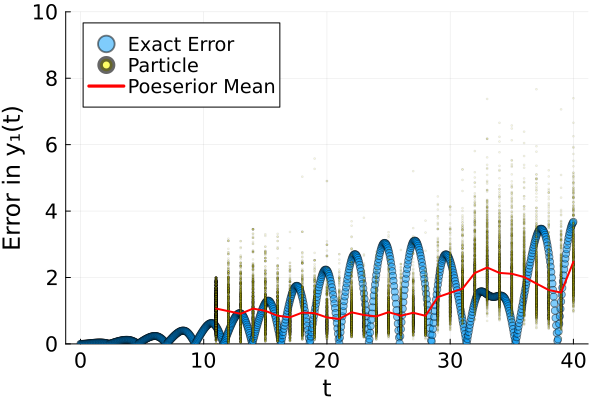}
  \end{subfigure}
  \hfill
  \begin{subfigure}[b]{0.49\textwidth}
    \includegraphics[width=\linewidth]    {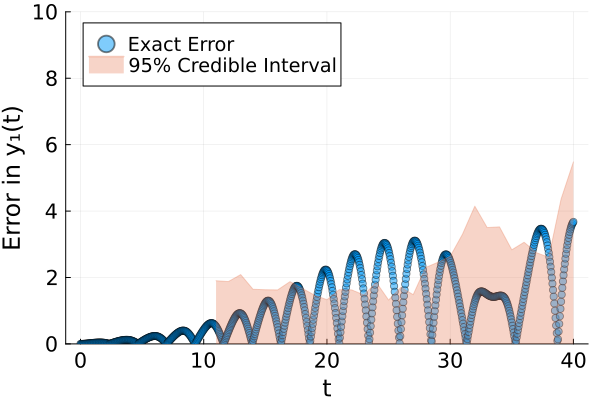}
  \end{subfigure}
   \caption{Discretization error quantification results for $y_1(t)$ in the pendulum system. 
\textbf{Left:} the first component $\sigma_{t_i}^1$ of the particle 
$\sigma_{t_i} = (\sigma_{t_i}^1, \sigma_{t_i}^2)^\mathsf{T}$. 
\textbf{Right:} the $95\%$ credible interval evaluated using samples 
$r_{t_i}^1 \sim \mathcal{N}(0, (\sigma_{t_i}^1)^2)$ drawn from the posterior predictive distribution.}
  \label{fig:Pendulum_particle_withparam}

  \begin{subfigure}[b]{0.49\textwidth}
    \includegraphics[width=\linewidth]{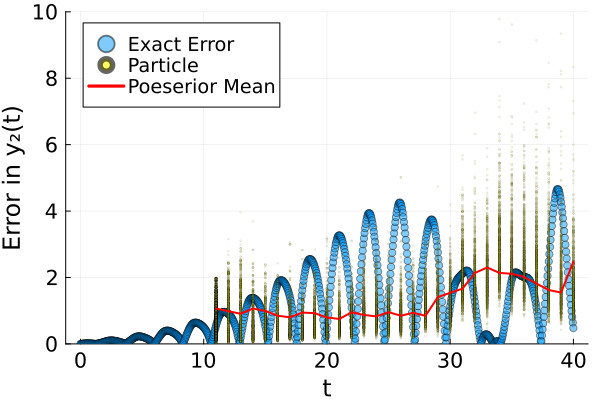}
  \end{subfigure}
  \hfill
  \begin{subfigure}[b]{0.49\textwidth}
    \includegraphics[width=\linewidth]{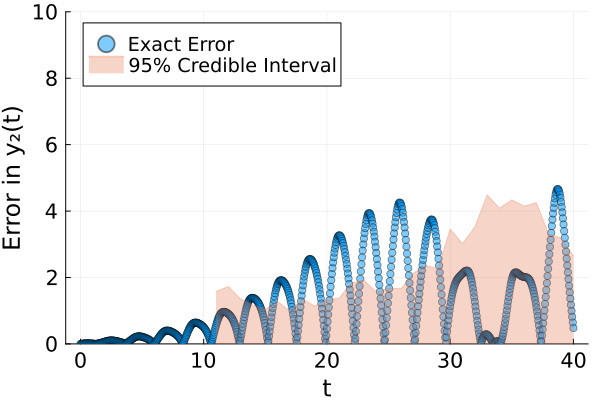}
  \end{subfigure}

  \caption{Discretization error quantification results for $y_2(t)$ in the pendulum system. 
\textbf{Left:} the second component $\sigma_{t_i}^2$ of the particle 
$\sigma_{t_i} = (\sigma_{t_i}^1, \sigma_{t_i}^2)^\mathsf{T}$. 
\textbf{Right:} the $95\%$ credible interval evaluated using samples 
$r_{t_i}^2  \sim \mathcal{N}(0, (\sigma_{t_i}^2)^2)$ drawn from the posterior predictive distribution.}
  \label{fig:Pendulum_posterior_predictive_withparam}
\end{figure}

\subsubsection{Parameter Inference with Discretization Error Quantification}

Next, 
we assess the performance of the proposed method (described in Section~\ref{subsec:JointBayes}) 
that jointly estimate the discretization error variance and the ODE parameter.
The observations are generated from the process with the true parameter value $L^* = 4.0$ at discrete time points $t = 11, \ldots, 40$, with the covariance matrix set to $\Gamma = \text{diag}(2.0^2, 2.0^2)$.
A prior on the ODE parameter $L$ is specified as $\mathcal{N}(3.0, 2.0^2)$. 
The number of particles is set to $80{,}000$.  The hyperparameters $(\alpha, \beta)$ are chosen from the pair $\lambda = (\alpha, 1/\alpha)$, where $\alpha \in \{5.0, 10.0, 15.0, \ldots, 500.0\}$.

Figure~\ref{fig:Pendulum_ODEparam_poterior} shows the posterior distributions of the ODE parameter, comparing it with an alternative that does not account for discretization errors. 
The latter is obtained by implementing Algorithm~\ref{alg:particle_self}, with $\sigma_{t_i} = (0.0, 0.0)^\mathsf{T}$ for all $t_i$.   
As observed in the figure, the peaks of both posteriors are slightly shifted to the left of the true parameter $L^* = 4.0$.  However, the proposed method exhibits broader support, as it accounts for discretization errors.  
In contrast, the posterior that ignores discretization errors behaves like a direct measure, indicating that such neglect leads to highly confident yet inaccurate parameter estimates.

Figures~\ref{fig:Pendulum_particle_withparam} and \ref{fig:Pendulum_posterior_predictive_withparam} show discretization error quantification results for $y_1 (t)$ and $y_2 (t)$ respectively. 
These results show that the proposed method underestimates the discretization error. 
This is likely due to the fact that, in the presence of discretization errors, some particles become concentrated around parameter values different from $L^* = 4.0$ that happen to match the observations.
As a result, the method may incorrectly infer that the discretization error is small,
highlighting a potential limitation of the proposed approach.

\begin{figure}[t] 
  \centering

  \begin{subfigure}[b]{0.49\textwidth}
    \includegraphics[width=\linewidth]{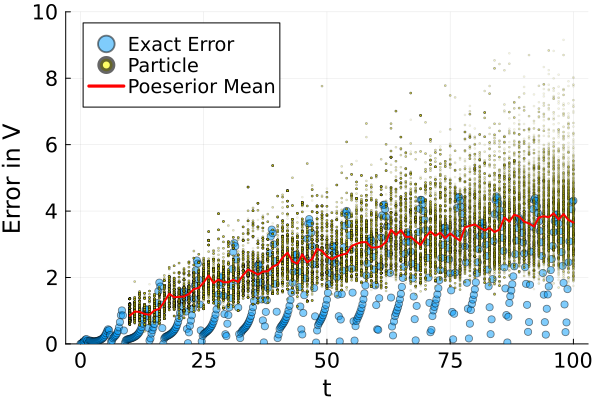}
  \end{subfigure}
  \hfill
  \begin{subfigure}[b]{0.49\textwidth}
    \includegraphics[width=\linewidth]{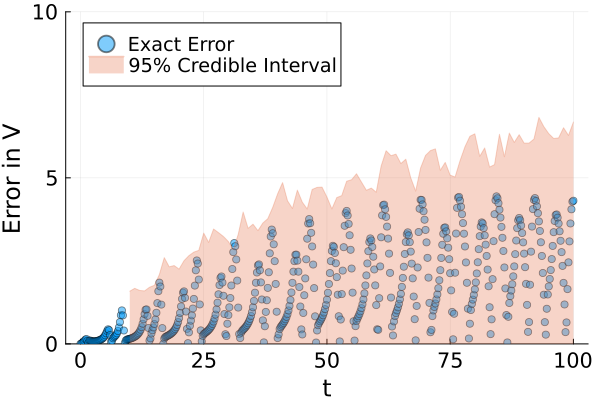}
  \end{subfigure}
   \caption{Discretization error quantification results for $V$ in the FitzHugh--Nagumo model. 
\textbf{Left:} the first component $\sigma_{t_i}^1$ of the particle 
$\sigma_{t_i} = (\sigma_{t_i}^1, \sigma_{t_i}^2)^\mathsf{T}$. 
\textbf{Right:} the $95\%$ credible interval evaluated using samples 
$r^1_{t_i} \sim \mathcal{N}(0, (\sigma_{t_i}^1)^2)$ drawn from the posterior predictive distribution.}
  \label{fig:particle_noparam}

  \begin{subfigure}[b]{0.49\textwidth}
    \includegraphics[width=\linewidth]
    {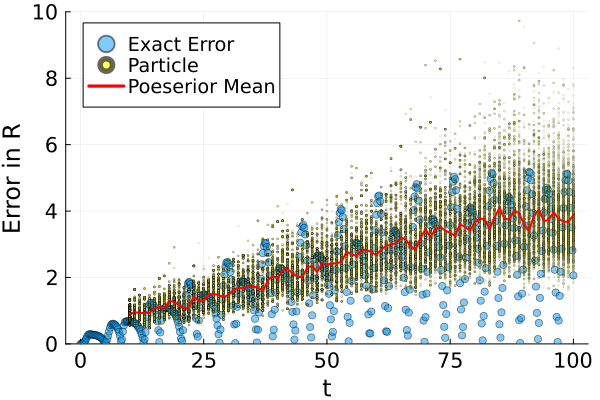}
  \end{subfigure}
  \hfill
  \begin{subfigure}[b]{0.49\textwidth}
    \includegraphics[width=\linewidth]{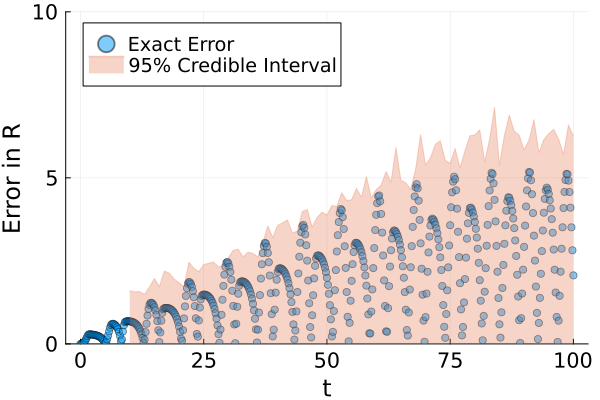}
  \end{subfigure}

  \caption{Discretization error quantification results for $R$ in the FitzHugh--Nagumo model. 
\textbf{Left:} the second component $\sigma_{t_i}^2$ of the particle 
$\sigma_{t_i} = (\sigma_{t_i}^1, \sigma_{t_i}^2)^\mathsf{T}$. 
\textbf{Right:} the $95\%$ credible interval evaluated using samples 
$r^2_{t_i} \sim \mathcal{N}(0, (\sigma_{t_i}^2)^2)$ drawn from the posterior predictive distribution.}
  \label{fig:posterior_predictive}
\end{figure}

\begin{figure}[t]
  \centering

  \begin{subfigure}{0.48\textwidth}
    \centering
    \includegraphics[width=\textwidth]{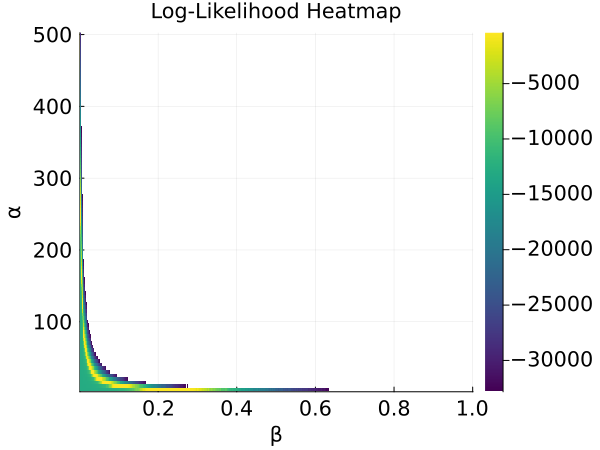}
  \end{subfigure}
  \hfill
  \begin{subfigure}{0.48\textwidth}
    \centering
    \includegraphics[width=\textwidth]{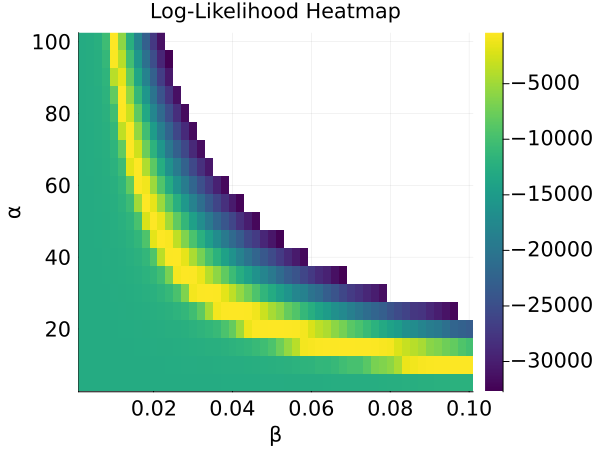}
  \end{subfigure}

  \caption{Log-likelihood heatmaps in Subsection~\ref{subsec:DEQ}. White regions indicate that the likelihood is effectively zero (i.e., output of the log-likelihood is \texttt{-Inf}). Left: full hyperparameter range. Right: zoomed region with $\alpha \in \{5,\dots,100\}$ and $\beta \in \{0.002,0.004, \dots,0.10\}$.}
  \label{fig:heatmap}
\end{figure}

\subsection{FitzHugh–Nagumo model}

We also demonstrate the effectiveness of the proposed method with the FitzHugh--Nagumo (FN) model:
\begin{equation}
\frac{d}{dt}
\left[
\begin{array}{l}
V \\
R
\end{array}
\right]
=
\left[
\begin{array}{l}
c \left( \rule{0pt}{2.5ex} V - \dfrac{V^3}{3} + R \right) \\[2.ex]
\rule{0pt}{2.5ex} -\dfrac{1}{c} \left( V - a + bR \right)
\end{array}
\right].
\label{eq:fhn}
\end{equation}
The two-dimensional differential equation involves three unknown parameters, $\theta = (a, b, c)$, which are to be inferred. 
We employ the explicit Euler method with a step size of $h = 0.2$.  The observation operator $H$ is set to the identity matrix.

\subsubsection{Discretization Error Quantification}
\label{subsec:DEQ}

We assess the performance of the proposed method (described in Section~\ref{subsec:state-space}) with the parameters fixed at $(a^*, b^*, c^*) = (0.2,\ 0.1,\ -0.5)$.
The covariance matrix of the observation noise is set to $\Gamma = \mathrm{diag}(0.1^2,\,0.1^2)$.
Observation data $y^*_{t_i}$ are obtained at \(91\) discrete time points,
$t_i = 10, 11, \dots, 100$.
The number of particles is set to $1,000$, and the distribution $P$ is fixed to $m \cdot I$ with $m \sim \mathrm{Gam}(\alpha, \beta) $.  
The hyperparameters $ (\alpha, \beta) $ are selected from the candidate sets
$
\alpha \in \{5.0, 10.0, \dots, 500.0\}$ and 
$\beta \in \{0.002, 0.004, \dots, 1.0\}$,
yielding a total of $100 \cdot 500 = 50,000$ candidates. 
We evaluate the log-likelihoods $\log p\big(y^*_{10:100} \mid \alpha, \beta \big)$ for $(\alpha, \beta)$ using 50 particles, and choose the pair that achieves the maximum log-likelihood.

Figures~\ref{fig:particle_noparam} and \ref{fig:posterior_predictive} show discretization error quantificatin results for $V$ and $R$ in the FN model, along with the absolute values of the exact discretization errors $r_i$. We can confirm that the temporal evolution of the particles and the  $95\%$ credible intervals closely follows the  time course of the discretization error $r_i$, demonstrating the effectiveness of the proposed approach.

Figure~\ref{fig:heatmap} presents a heatmap of the log-likelihoods $\log p\!\left( y^*_{10:100} \mid \alpha, \beta \right)$ for all candidate parameter pairs. The heatmap indicates that parameter pairs with low log-likelihood values tend to cluster 
along the curve defined by $\alpha \cdot \beta = 1$, similar to what is observed for the pendulum system.

\subsubsection{Parameter Inference with Discretization Error Quantification}
\begin{figure}[t]%
  \centering

  \begin{subfigure}[b]{0.49\textwidth}
    \includegraphics[width=\linewidth]{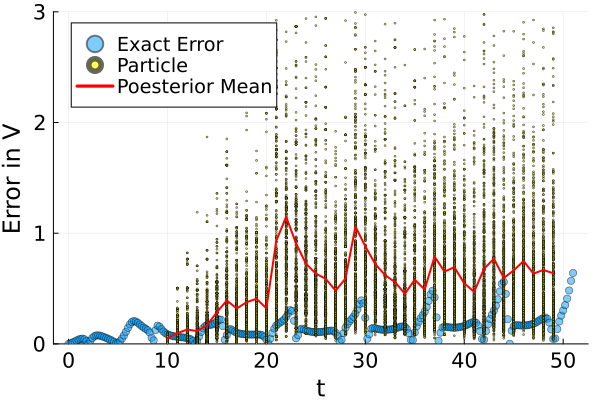}
  \end{subfigure}
  \hfill
  \begin{subfigure}[b]{0.49\textwidth}
    \includegraphics[width=\linewidth]{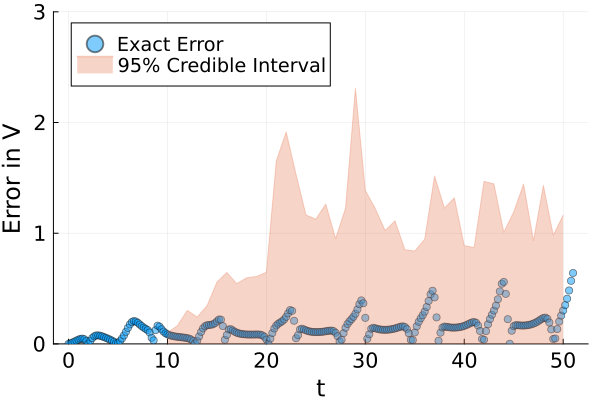}
  \end{subfigure}
  \caption{Discretization error quantification results for $V$ in the FitzHugh--Nagumo model. 
\textbf{Left:} the first component $\sigma_{t_i}^1$ of the particle 
$\sigma_{t_i} = (\sigma_{t_i}^1, \sigma_{t_i}^2)^\mathsf{T}$. 
\textbf{Right:} the $95\%$ credible interval evaluated using samples 
$r^1_{t_i} \sim \mathcal{N}(0, (\sigma_{t_i}^1)^2)$ drawn from the posterior predictive distribution.}
  \label{fig:particle}

  \begin{subfigure}[b]{0.49\textwidth}
    \includegraphics[width=\linewidth]
    {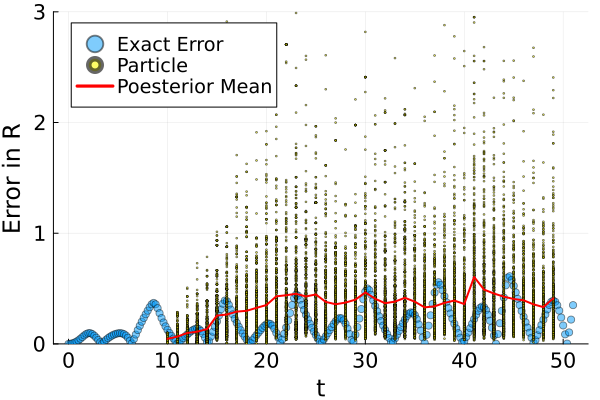}
  \end{subfigure}
  \hfill
  \begin{subfigure}[b]{0.49\textwidth}
    \includegraphics[width=\linewidth]{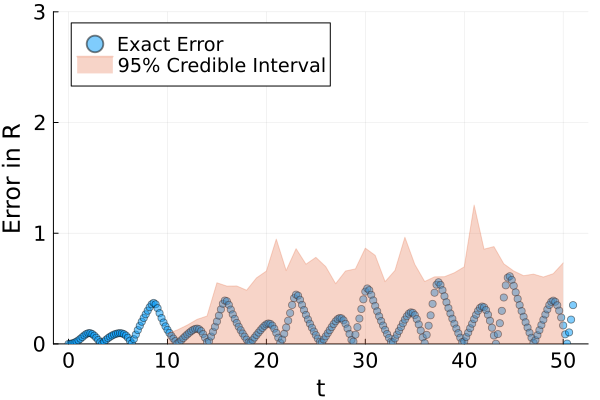}
  \end{subfigure}
  \caption{Discretization error quantification results for $R$ in the FitzHugh--Nagumo model. 
\textbf{Left:} the second component $\sigma_{t_i}^2$ of the particle 
$\sigma_{t_i} = (\sigma_{t_i}^1, \sigma_{t_i}^2)^\mathsf{T}$. 
\textbf{Right:} the $95\%$ credible interval evaluated using samples 
$r^2_{t_i} \sim \mathcal{N}(0, (\sigma_{t_i}^2)^2)$ drawn from the posterior predictive distribution.}
  \label{fig:predictives_discretized_posterior}

  \end{figure}
  
We address the simultaneous inference of ODE parameters and discretization errors using the self-organizing technique described in Subsection~\ref{subsec:JointBayes}. The priors for the ODE parameters are set as follows:
\begin{align*}
    a  \sim \mathcal{N}_{[-1.5,\, 1.5]}(0, 0.9^2), ~~~
    b  \sim \mathcal{N}_{[-0.1,\, 1.0]}(1.0, 1.0^2),  ~~~
    c  \sim \mathcal{N}_{[0.1,\, 2.0]}(1.0, 1.0^2),
\end{align*}
where $\mathcal{N}_{[s_1,\, s_2]}(\mu, \sigma^2)$ denotes a truncated normal distribution of $\mathcal{N}(\mu, \sigma^2)$ restricted to the interval $[s_1, s_2]$. The number of particles is fixed at $500{,}000$, which is larger than in the previous subsection to account for the increased dimensionality of the latent space. The noise matrix $\Gamma$ in the observation process is set to $\mathrm{diag}(1.0^2,\ 1.0^2)$. Observations $y^*_{t_i}$ are assumed to be obtained at time points $t_i = 10, 11, \ldots, 50$, with the true parameter fixed at $(a^*, b^*, c^*) = (0.5,\ 0.2,\ 1.0)$. The distribution $P_\lambda$ is $m \cdot I \sim P_\lambda$, where $m \sim \text{Gam}(\alpha, \beta)$, as in the previous sections.  The hyperparameters $(\alpha, \beta)$ are selected from the pair \( \lambda = (\alpha, \frac{1}{\alpha}) \), such that \( \alpha \cdot \frac{1}{\alpha} = 1 \), where  \( \alpha \in \{2.0, 4.0, 6.0, \ldots, 50.0\} \).

Figures~\ref{fig:particle} and \ref{fig:predictives_discretized_posterior} show discretization error quantification results. It can be seen that the discretization error in \( V \) is overestimated, particularly around time \( t = 20 \). This overestimation is likely due to the local error \( L(t_i) \) also being overestimated in the interval \( t = 20 \sim 30 \). After \( t = 30 \), the particles are more widely spread; however, on average,
they are located close to the exact error. Regarding the discretization error variance in \( R \), we can see that the proposed method successfully captures the exact discretization errors.

Figure~\ref{fig_ODE_posterior} presents the posterior distributions of the ODE parameters obtained using the proposed method, compared with those obtained by ignoring discretization errors. The latter is computed in the same manner as in the pendulum system. The results show that ignoring discretization errors yields posterior distributions that fail to capture the true parameters, particularly $a$ and $c$. In contrast, when the discretization error variances are taken into account, the supports of the posterior distributions shift toward the true parameter values.

\begin{figure}[t]
  \centering
  \begin{subfigure}[b]{1.\textwidth}
    \includegraphics[width=\linewidth]{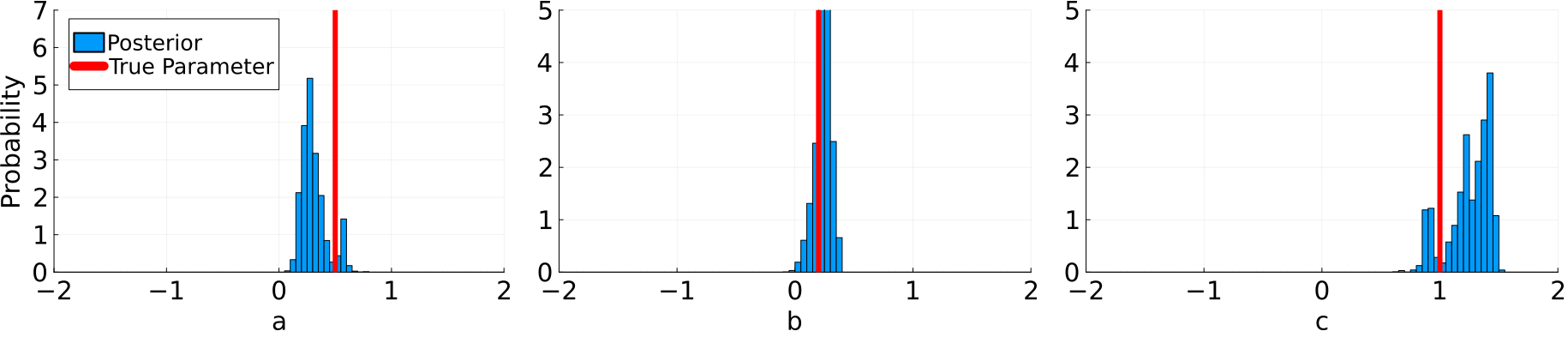}
  \end{subfigure}
  \\
  \vspace{5mm}
  \begin{subfigure}[b]{1.\textwidth}
    \includegraphics[width=\linewidth]{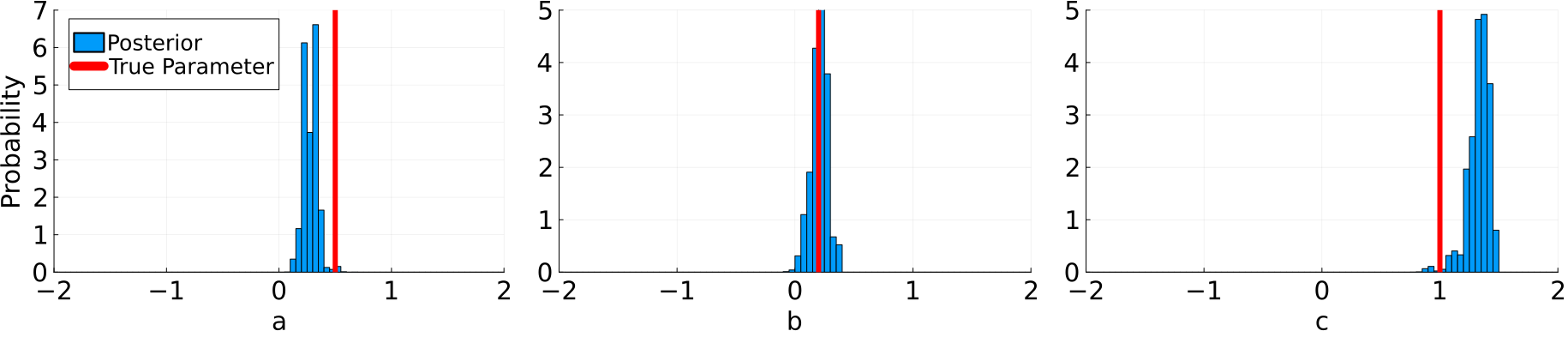}
  \end{subfigure}
  \caption{Posterior distributions of the ODE parameters $(a, b, c)$ for the FitzHugh-Nagumo model. The first row shows the posterior obtained using Algorithm \ref{alg:particle_self} with the proposed prior on the discretization error variances, whereas the second row shows the posterior obtained by applying Algorithm \ref{alg:particle_self} with $\sigma_{t_i} = (0, 0)^\mathsf{T}$ for all $t_i$. }
  \label{fig_ODE_posterior}
\end{figure}

\section{Conclusion}
\label{sec:conclusions}

In this study, we proposed a Bayesian framework for parameter estimation in ordinary differential equation (ODE) models that explicitly accounts for discretization errors introduced by numerical solvers. 
Our approach models these discretization errors as random variables and performs joint Bayesian inference on both the ODE parameters and the corresponding discretization error variances. 
By imposing a Markov property on a prior over the discretization error variances, we reformulate the inference problem into a state-space modeling framework.
We also proposed a Markov prior motivated by a fundamental principle from numerical analysis:
global discretization errors accumulate from local errors.
We established a convergence rate for the proposed prior as the solver step size tends to zero, thereby providing a theoretical guarantee. 
The effectiveness of the proposed approach is demonstrated through numerical experiments on the pendulum system and the FitzHugh--Nagumo model, showing improved parameter estimation and uncertainty quantification.

Several directions remain for future work. 
First, the proposed prior assumes that the discretization error variances are diagonal, which limits its ability to capture correlations among the discretization errors.
Developing an appropriate Markov prior for the non-diagonal case---based, for instance, on the Wishart distribution---remains an important open problem.

Another promising direction is to extend the proposed method to high-dimensional systems, including semi-discretized systems for partial differential equations. 
Although particle filters are attractive due to their minimal modeling assumptions, they often fail to scale effectively in high-dimensional settings.
In contrast, extended and ensemble Kalman filters are generally better suited to such scenarios but require assumptions, such as linearity and Gaussianity. 
Designing a prior that satisfies these assumptions is therefore an important avenue for future research. 

Finally, integrating the proposed framework with the field of simulation-based inference (SBI) offers another compelling direction.
SBI has emerged as a powerful approach for parameter inference through model simulations \citep{cranmer2020frontier}. In particular, recent advances in deep neural network–based methods have greatly expanded its applicability to high-dimensional simulators \citep{papamakarios2016fast, radev2023jana, dax2023flow} or simulators constructed by differential equations \citep{gloeckler2025compositional, hikida2025multilevelneuralsimulationbasedinference}.
However, discretization errors introduced by numerical solvers have not yet been  incorporated into SBI methodologies.
Developing SBI methodologies that explicitly account for such errors is an important future challenge.

\appendix

\section{Derivation of Smoothing in Particle Filters}\label{appendix_smoothing}

In this appendix, we provide the mathematical background for the simultaneous resampling technique~\eqref{eq_smoothing}, used to obtain smoothed particles
$
\big\{ \big( \Sigma_{0|0}^k, \dots, \Sigma_{i|0:i}^k \big) \big\}_{k=1}^K
$
that represent the distribution \( p \left(\Sigma_{0:i} \mid y^*_{0:i}\right) \), which plays a central role in the present work. 
This technique was originally proposed by Kitagawa~\citep{kitagawa1996monte}.

We establish the result by mathematical induction.
Assume that we already have smoothed particles 
$
\big\{ \big( \Sigma_{0|0}^k, \dots, \Sigma_{i|0:i}^k \big) \big\}_{k=1}^K
$
representing the distribution \( p \left(\Sigma_{0:i} \mid y^*_{0:i}\right) \).
Then, by the prediction step
$
\Sigma_{i+1 \mid 0:i}^k \sim p\big( \Sigma_{t_{i+1}} \mid \Sigma_{i \mid 0:i}^k \big)
$,
we obtain particles
$
\big\{ \big( \Sigma_{0|0}^k, \dots, \Sigma_{i|0:i}^k, \Sigma_{i+1|0:i}^k \big) \big\}_{k=1}^K
$
that approximate the distribution $ p \big(\Sigma_{0:i+1} \mid y^*_{0:i}\big) $, one step ahead in time.

From straightforward computation, we have
\begin{align}
p\left( \Sigma_{0:{i+1}} \mid y_{0:i+1} \right) 
&= \frac{p(y_{i+1} \mid \Sigma_{0:{i+1}}, y_{0:i}) \, p( \Sigma_{0:{i+1}} \mid y_{1:i})}
{ p(y_{i+1} \mid y_{0:i}) } \notag \\
&\propto p(y_{i+1} \mid \Sigma_{0:{i+1}}, y_{1:i}) \, p( \Sigma_{0:{i+1}} \mid y_{0:i}) \notag \\
&= p(y_{i+1} \mid \Sigma_{{i+1}}) \, p( \Sigma_{0:{i+1}} \mid y_{0:i}),
\label{eq:correction_step_smoothed}
\end{align}
where the final equality follows from 
{\color{black} the hidden Markov structure, that assumes conditional independence between $y_{i+1}$ and $(\Sigma_{0:i}, y_{1:i})$ given $\Sigma_{i+1}$. }

Recall that we have a particle approximation
$
\big\{ \big( \Sigma_{0|0}^k, \dots, \Sigma_{i|0:i}^k, \Sigma_{i+1|0:i}^k \big) \big\}_{k=1}^K
$
for the distribution \( p \left(\Sigma_{0:i+1} \mid y^*_{0:i}\right) \).
Substituting this approximation into \eqref{eq:correction_step_smoothed} gives
\[
p\left( \Sigma_{0:{i+1}} \mid y_{0:i+1}^* \right) 
\propto p(y_{i+1}^* \mid \Sigma_{{i+1}}) \, p( \Sigma_{0:{i+1}} \mid y_{0:i}^* )
\approx \sum_{k=1}^K w_k \, 
\delta_{( \Sigma_{0|0}^k, \dots, \Sigma_{i|0:i}^k, \Sigma_{i+1|0:i}^k )},
\]
where the weights $w_k$ are defined as in \eqref{eq:particle_weight}. 
By applying the same argument as in the correction step of 
Subsection~\ref{subsec:state-space}, 
the particles
$
\big\{ \big( \Sigma_{0|0}^k, \dots, \Sigma_{i|0:i+1}^k \big) \big\}_{k=1}^K
$
obtained by the resampling step (\ref{eq_smoothing}) 
yield a particle approximation for 
$ p \left(\Sigma_{0:i+1} \mid y^*_{0:i+1}\right) $.

\section{Proof of Theorem \ref{theo_convergence_rate}}

In this appendix, we prove Theorem \ref{theo_convergence_rate}.
While \eqref{eq:prior_stepsize} defines the Markov prior only at the observation points 
$t_0, \ldots, t_N$, the recursive construction in \eqref{eq:proposed_prior} naturally allows us 
to extend this definition to finer time grids as follows:
\begin{align*}
    p_h (\sigma_{t_{0, 0}},\sigma_{t_{0, 1}},...,\sigma_{t_{0, k-1}},\sigma_{t_{1, 0}}   \ldots, \sigma_{t_{N, 0}}) = p_h (\sigma_{t_{0, 0}}) \prod_{i=0}^{N-1} \cdot \prod_{j=0}^{k-1} p_h (\sigma_{t_{i, j+1}} \mid \sigma_{t_{i, j}}). 
\end{align*}

\begin{proof}[Proof of Theorem \ref{theo_convergence_rate}]
Let us recall that random variables $\{ X_h \}_{h>0}$ and $\{ Y_h \}_{h>0}$ indexed by $h$ satisfy $X_h  = \mathcal{O}_p (Y_h)$ if $X_h/Y_h$ is uniformly tight:
$$ \lim_{\epsilon \rightarrow \infty} \sup_{h}\mathbb{P}\left( \left|\frac{X_h}{Y_h} \right| \geq \epsilon \right)  = 0. $$
By Chebyshev’s inequality \cite[Theorem 1.6.4.]{durrett2019probability}, we have
\begin{equation}
    \sup_{h} \mathbb{P}\left(\left|\frac{\sigma_{t_i} }{h^{\min(\alpha, \beta)}} \right| \geq \epsilon \right) \leq \sup_{h} \frac{1}{\epsilon^2\cdot h^{2\min(\alpha, \beta)}} \mathbb{E}_{\sigma_{t_i} \sim p_h (\sigma_{t_i})}[\| \sigma_{t_i}  \|^2].
\end{equation}
Hence, it suffices to prove that $\mathbb{E}_{p_h (\sigma_{t_i})}[\| \sigma_{t_i}  \|^2] = \mathcal{O}(h^{2\min(\alpha, \beta)})$ for $i \geq 1$. 

Define \( e_{i, j} := \mathbb{E}_{p_h(\sigma_{t_{i, j}})}[\| \sigma_{t_{i, j}} \|^2] \) 
to simplify the notation.
By the definition \eqref{eq:proposed_prior},
it holds that for $0 \leq i \leq (N -1)$ and $0 \leq j \leq k-1$, 
\begin{equation*}
\sigma_{t_{i, j+1}  }  = M_{i, j} \cdot \sigma_{t_{i, j} }  + |L\left(t_{i, j}  \right)|.
\end{equation*}
This leads to the inequality 
\begin{align}
    e_{i, j+1} &=\mathbb{E}_{p_h(\sigma_{i, j+1})}[\| \sigma_{i, j+1} \|^2]  \notag \\  
    &= \mathbb{E}_{p_h(\sigma_{t_{i, j}}), P}[\| M_{i, j} \cdot \sigma_{t_{i, j}} + |L(t_{i, j})| \|^2]\notag \\[1ex]
    &\leq \mathbb{E}_{p_h(\sigma_{t_{i, j}}), P}[\| M_{i, j} \cdot \sigma_{t_{i, j}}  \|^2] + \| L(t_{i, j}) \|^2 \notag\\[1ex]
    &\leq \mathbb{E}_{p_h(\sigma_{t_{i, j}}), P}[\|M_{i, j} \|_F^2 \cdot \, \|\sigma_{t_{i, j}}  \|^2] +  \| L(t_{i, j}) \|^2 \notag \\[1ex]
    &\leq \mathbb{E}_{P}[\|M_{i, j} \|_F^2 ] \cdot \mathbb{E}_{p_h(\sigma_{t_{i, j}})}[\|\sigma_{t_{i, j}}  \|^2] +  \| L(t_{i, j}) \|^2  \notag \\[1ex]
    &=  \mathbb{E}_{P} [\| I - ( I  - M_{i, j} )\|_F^2] \cdot e_{i, j}  +  \| L(t_{i, j}) \|^2 \notag  \notag \\[1ex]
    &\leq \left\{ \mathbb{E}_{P} [\| I \|_F^2] + \mathbb{E}_{P} [\|  I  - M_{i, j} \|_F^2] \right\} \cdot e_{i, j} +  \| L(t_{i, j}) \|^2 \notag   \\
    &\leq (1 + Lh^2) \cdot e_{i, j} + \| L(t_{i, j}) \|^2 \label{eq:convergence rate1}  \\
    &\leq (1 + Lh^2) \cdot e_{i, j} + C_{i, j} ^2 h^{2\alpha +2}  \label{eq:convergence rate2} 
\end{align}
Here, the inequality in (\ref{eq:convergence rate1}) follows from Assumption (I), 
whereas the one in (\ref{eq:convergence rate2}) can be obtained by choosing \( C_{i, j} \) such that 
\( \| L(t_{i, j}) \| \leq C_{i, j} h^{p+1} \). 
The existence of such constants \( C_{i, j} \) is guaranteed by Assumption (III). Setting \( C = \max_{0 \leq i \leq N-1, 0 \leq j \leq k-1  } C_{i, j} \), we have a uniform bound
\begin{equation}\label{eq:recusive_L2bound}
    \eqref{eq:convergence rate2} \leq  (1 + Lh^2) \cdot e_{i, j } + C ^2 h^{2\alpha +2},
\end{equation}
which does not depend on index $({i, j})$. 
Applying \eqref{eq:recusive_L2bound} recursively, we obtain
\begin{align}
    e_{i, j+1} 
    &\leq (1+Lh^2)^{j+1+ik}\cdot e_{0, 0} + C^2 h^{2\alpha +2} \sum_{l=1}^{j+ik+1} (1+Lh^2)^{l-1}\notag  \\
    &= (1+Lh^2)^{j+1+ik} \cdot e_{0, 0} + C^2 h^{2\alpha +2} \cdot  \frac{(1+Lh^2)^{j+1+ik} - 1}{Lh^2}  \notag \\
     &\leq \exp\big( ( j+1+ik ) Lh^2 \big) \cdot  e_{0, 0} + C^2 h^{2\alpha } \cdot  \frac{\exp\big( ( j+1+ik ) Lh^2 \big) - 1}{L} \notag \\
          &\leq \exp\big( L \left(\left( j+1+ik \right)h\right)^2 \big) \cdot  e_{0, 0} + C^2 h^{2\alpha } \cdot  \frac{\exp\big( L \left( \left( j+1+ik \right)h\right)^2 \big) - 1}{L } \label{eq:convergence_rate_l>1}. 
          \\
          &\leq \exp\big( L \big(t_{i, j+1} - t_{0} \big)^2 \big) \cdot  e_{0, 0} + C^2 h^{2\alpha } \cdot  \frac{\exp\big( L \left( \left(t_{i, j+1} - t_{0} \right) \right)^2    \big) - 1}{L } . 
 \end{align}
 Since this equality holds for  $j = k -1 $ and $0 \leq i \leq N-1 $, we have 
 \begin{align}
  \mathbb{E}_{p_h (\sigma_{t_{i}})}[\| \sigma_{t_i}  \|^2] &= \mathbb{E}_{p_h (\sigma_{t_{i-1, k}})}[\| \sigma_{t_{i-1, k}}  \|^2] = e_{i-1, k} = e_{i-1, (k-1) +1}  \notag \\
          & \leq \exp\big( L \big( t_{i-1, k} - t_0 \big)^2 \big) \cdot  e_{0, 0} + C^2 h^{2\alpha } \cdot  \frac{\exp\big( L \big( t_{i-1, k} - t_0 \big)^2 \big) - 1}{L }   \\
          & \leq \exp\big( L \left( t_{i} - t_0 \right)^2 \big) \cdot  e_{0, 0} + C^2 h^{2\alpha } \cdot  \frac{\exp\big( L ( t_{i} - t_0 )^2 \big) - 1}{L } \label{eq:convergence_rate_final}.
 \end{align}
Note that \( e_{0, 0} = \mathbb{E}_{p_h (\sigma_{t_0})}\!\left[\| \sigma_{t_0} \|^2\right] = \mathcal{O}(h^{2\beta}) \) (Assumption (III)), and therefore the first term in \eqref{eq:convergence_rate_final} is also of order \(\mathcal{O}(h^{2\beta})\).  
This implies that \eqref{eq:convergence_rate_final} is of order \(\mathcal{O}(h^{2\min(\alpha, \beta)})\), which concludes the proof.
\end{proof}

\section*{Acknowledgements}
This work is supported by JSPS KAKENHI Grant Numbers 24K02951, 24K00540, 25H00449, 24K20750, JP25H01454 and JST ACT-X, Japan, Grant Number JPMJAX25CH.

\bibliographystyle{unsrtnat}
\bibliography{main}   

\end{document}